\def\K{\mathbb K}
\def\Q{\mathbb Q}
\def\N{\mathbb N}
\def\M{\mathsf M}
\def\Int{I}
\newcommand{\lclm}{\operatorname{lclm}} 
\newcommand{\gcld}{\operatorname{gcld}} 
\newtheorem{theo}{Theorem}
\newtheorem{lemma}{Lemma}
\newtheorem{prop}{Proposition}
\newtheorem{corollary}{Corollary}
\theoremstyle{definition}
\newtheorem{example}{Example}
\begin{document}

\title{Chebyshev Expansions for\\ Solutions of Linear Differential Equations}

\address{Algorithms Project, Inria Paris-Rocquencourt}
\author{Alexandre Benoit}
\author{Bruno Salvy}
\email{Alexandre.Benoit@inria.fr,Bruno.Salvy@inria.fr}

\thanks{This work was supported in part by the Inria-Microsoft Research Joint Centre.}
\thanks{To appear in the proceedings of ISSAC'09.}
       
\begin{abstract}
A Chebyshev expansion is a series in the basis of Chebyshev polynomials of the first kind. When such a series solves a linear differential equation, its coefficients satisfy a linear recurrence equation. We interpret this equation as the numerator of a fraction of linear recurrence operators. This interpretation lets us give a simple view of previous algorithms, analyze their complexity, and design a faster one for large orders.
\end{abstract}

\keywords{Chebyshev series, Ore polynomials}
\maketitle

\section{Introduction}
Chebyshev series are series of the form
\begin{equation}\label{chebseries}
f(x)=\frac{c_0}{2}+\sum_{n=1}^{\infty}c_nT_n(x),
\end{equation}
where $T_n$ denotes the $n$th Chebyshev polynomial of the first kind. These polynomials can be defined by
\begin{equation}\label{defTn}
T_n(\cos\theta)=\cos(n\theta),
\end{equation}
so that these series behave like Fourier series. Thus in particular, this series converges pointwise to~$f$ on~$[-1,1]$ if $f$ is continuous there, while the convergence is uniform if~$f$ satisfies a Dini-Lipschitz condition or is of bounded variation (and \emph{a fortiori} if it is differentiable), see, e.g.~\cite{GilSeguraTemme2007a,MasonHandscomb2003}.
Then truncations of the series provide polynomials with good approximation properties on the interval $[-1,1]$, which makes these series an interesting data structure for real functions~\cite{Trefethen2007}. 

Orthogonality of the~$T_n$ leads to the following integral representation of the coefficients:
\[c_n=\frac{2}{\pi}\int_{-1}^1{\frac{f(x)T_n(x)}{\sqrt{1-x^2}}dx}\qquad (n\in \N).\]
We say that~$f$ \emph{admits a Chebyshev expansion} $\sum^\star{c_nT_n}$ when these integrals converge, the symbol~$\sum^\star$ accounting for the factor~$1/2$ in front of~$c_0$ in~\eqref{chebseries}.

In the frequent case when~$f$ is a solution to a linear differential equation, Clenshaw~\cite{Clenshaw1957} has given a numerical scheme to compute the coefficients~$c_n$ without computing all these integrals.
In that case, the coefficients~$c_n$ obey a linear recurrence equation. A method for the computation of this recurrence has been showed by several authors, first for small orders~\cite{FoxParker1968,Luke1969a}, then in more generality by Paszkowski~\cite{Paszkowski1975} and in the context of (early) symbolic computation by Geddes~\cite{Geddes1977}. We call this method ``Paszkowski's algorithm''.
The  use of this recurrence to compute the coefficients numerically is discussed in~\cite{Wimp1984}.
Paszkowski's method has been further improved by Lewanowicz~\cite{Lewanowicz1976} who gave an algorithm computing a smaller order recurrence in some cases. However, Lewanowicz's algorithm is not much discussed in the literature since it looks complicated (see the original article and the comment in \cite[p.~186]{Wimp1984}).
More recently, other methods have been given by Rebillard~\cite{Rebillard1998} and Rebillard and Zakraj\v{s}ek~\cite{RebillardZakrajsek2006}. 

In this work, we give a simple unified presentation of most of these algorithms, and design a faster one for large orders. Postponing the proofs and rigorous definitions, the basic idea can be presented by analogy with the computation of a recurrence for coefficients of Taylor series. The monomial basis~$M_n(x)=x^n$
satisfies
\begin{equation}\label{taylor}
xM_n(x)=M_{n+1}(x),\qquad M_n'(x)=nM_{n-1}(x).
\end{equation}
The analogous relations on the Chebyshev polynomials are easily derived from~\eqref{defTn} and trigonometry: 
\begin{align}
	2xT_n(x)&=T_{n+1}(x)+T_{n-1}(x),\label{eq:Cheb1}  \\
	2(1-x^2)T'_n(x)&=-nT_{n+1}(x)+nT_{n-1}(x).\label{eq:Cheb2}
\end{align}
Given a series $f(x)=\sum{c_nM_n(x)}$, \eqref{taylor} leads to expressions for the coefficient of $M_n(x)$ in~$xf$ and $f'$: multiplication by~$x$ maps to a negative shift on the indices; differentiation maps to a positive shift of the index followed by multiplication by~$n+1$. Algebraically, we thus get an algebra morphism mapping~$x$ to~$X$ and $d/dx$ to~$D$, with $X:=S^{-1}$, $D:=(n+1)S$. Here, $S$ denotes the shift operator: $u(n)\mapsto u(n+1)$, that does not commute with multiplication by~$n$.
Now, if $f$ is solution of a linear differential equation 
\[
p_k(x)f^{(k)}(x)+\dots+p_0(x)f(x)=0,
\]
we deduce a recurrence operator $p_k(X)D^k+\dots+p_0(X)$ for its Taylor coefficients.
\begin{example} The simplest example is the exponential, for which $f'-f=0$ translates into~$D-1=(n+1)S-1$ ($1$ denotes identity), which gives the recurrence $(n+1)c_{n+1}-c_n=0$ satisfied by~$c_n=1/n!$. 
\end{example}
The procedure for a series~$f(x)=\sum{c_nT_n(x)}$ starts similarly: multiplication by~$x$ maps to
\begin{equation}\label{imagex}
	X:=(S+S^{-1})/2.
\end{equation}
The difference comes from the factor~$(1-x^2)$ in~\eqref{eq:Cheb2}. The operation of differentiation followed by multiplication by~$(1-x^2)$ is readily seen to map to~$(S-S^{-1})n/2$, but no simple linear operation for the Chebyshev coefficients of $f'$ exists. The idea at this stage is to divide by $1-x^2$ afterwards by introducing a \emph{formal inverse} of~$1-X^2$. Thus we write~$D:=(1-X^2)^{-1}(S-S^{-1})n/2$. This can be further simplified since
$1-X^2=-(S-S^{-1})^2/4$,
so that 
\begin{equation}\label{imagedx}
D:=2(S^{-1}-S)^{-1}n.
\end{equation}
We call such an expression a \emph{fraction of recurrence operators}.
\begin{example}\label{example exp} For the exponential, we now get
	\[D-1=2(S^{-1}-S)^{-1}n-1=(S^{-1}-S)^{-1}(2n-(S^{-1}-S)).\] 
The last term is an analogue of reduction to the same denominator. The final factor will be called the numerator of the fraction. It corresponds to the recurrence
\begin{equation}\label{recIn}
2nc_n-c_{n-1}+c_{n+1}=0.
\end{equation}
It turns out that in this example, the Chebyshev coefficients are known:
$c_n=2I_n(1)$, where~$I_n$ is the modified Bessel function of the first kind, and they do satisfy~\eqref{recIn}.
\end{example}              
This example generalizes. We show here that all the algorithms mentioned above can be interpreted as first rewriting the input linear differential equation in one way or another, then applying the morphism above and finally returning the numerator of the result. In the case of Lewanowicz's algorithm the fraction is normalized (its numerator and denominator are relatively prime), which is why its output may have smaller order.

In Section~\ref{sec:Ore}, we give the formal setting for fractions of recurrence operators, together with the basic algorithms. This is then applied to the specific case of Chebyshev series in Section~\ref{sec:Chebyshev}. Then we give a compact presentation of Paszkowski's and Rebillard's algorithms, provide a complexity analysis and design a faster algorithm in Section~\ref{sec:algo}. We briefly comment on the different approach taken by Rebillard and Zakraj\v{s}ek in~\S\ref{rebillard-zakrajsek}. We conclude in Section~\ref{sec:examples} with a few examples.

\section{Fractions of Recurrence Operators}\label{sec:Ore}
We use  Ore's framework of non-commutative polynomials~\cite{Ore33}, that we now recall.
                                                                      
\subsection{Ore Polynomials}
The rings of linear differential operators and of linear recurrence operators are special cases of rings of Ore polynomials. They possess the commutation rules
\[\frac{d}{dx}p(x)=p(x)\frac{d}{dx}+p'(x);\qquad
Sp(n)=p(n+1)S.\]
More generally, a ring of polynomials in an indeterminate~$\partial$ with coefficients in a field~$\K$ is an \emph{Ore polynomial ring} when its product is defined by associativity from
\begin{equation}\label{commut}
\partial p=\sigma(p)\partial+\delta(p),\qquad p\in\K
\end{equation}
where for all $a$ and $b$ in~$\K$, 
\begin{alignat*}{3}
\sigma(a+b)&=\sigma(a)+\sigma(b),\qquad &\sigma(ab)&=\sigma(a)\sigma(b),\qquad \sigma(a^{-1})=\sigma(a)^{-1},\\
\delta(a+b)&=\delta(a)+\delta(b),&\delta(ab)&=\sigma(a)\delta(b)+\delta(a)b.
\end{alignat*}
The ring is denoted~$\K\langle\partial;\sigma,\delta\rangle$. Linear differential operators are obtained with~$\sigma=\operatorname{Id}$ and~$\delta=d/dx$; linear recurrences operators with~$\sigma=S$ and~$\delta=0$. 

The main property of these rings is that the degree (with respect to~$\partial$) of a product is the sum of the degrees of its factors. (In particular, there are no zero-divisors). From there, it is not difficult to write an algorithm for Euclidean division on the right. Once right Euclidean division is available, the Euclidean algorithm and its extended version follow and can be used to compute: greatest common right divisors, denoted gcrd; least common left  multiples, denoted lclm; the corresponding cofactors for the Bézout identity and for the lclm~\cite{Ore33,BronsteinPetkovsek1996}.

When~$\sigma$ is invertible, we also get Euclidean division on the left, and from there greatest common left divisors (gcld), least common right multiples (lcrm) and the corresponding cofactors by the Euclidean algorithm.
If moreover $\delta=0$, as is the case for recurrence operators, it is also possible to define Laurent polynomials with~$\partial\partial^{-1}=\partial^{-1}\partial=1$ and $\partial^{-1}a=\sigma^{-1}(a)\partial^{-1}$. These are denoted~$\K\langle\partial,\partial^{-1};\sigma\rangle$. 

The rings we use in this work are the ring of linear differential operators denoted $\Q(x)\langle\partial_x;\operatorname{Id},d/dx\rangle$ and the ring of linear recurrence operators $\Q(n)\langle S,S^{-1};S\rangle$ (with a different meaning for both~$S$).

Apart from their non-commutativity, Ore polynomials generally behave like ordinary polynomials. A notable exception is divisibility. 
\begin{example}The recurrence operator $P=(n+1)^{-1}(S+1)$ is relatively prime with~$Q=nS+n+2$, but $Q$ is a right divisor of $P^2$.
\end{example}
Still, the following property holds (and similarly for gcrd's when they exist):
\begin{equation}\label{gcld}
\gcld(AB,AC)=A\gcld(B,C).
\end{equation}
Indeed, $A$ is a left divisor of~$\gcld(AB,AC)$ and the remaining factor has to be a left divisor of both~$B$ and~$C$. The converse divisibility is clear.

In order to distinguish the action of an operator from the product in these rings of operators, which corresponds to composition of actions, we use the notation~$\cdot$ for the former. Thus~$\partial_x\cdot f=f'$, $S\cdot u_n=u_{n+1}$.

\subsection{Fractions}

Ore's construction of fractions parallels the commutative case.
Given two non-zero polynomials~$Q_1$ and~$Q_2$, by definition of the lclm, there exist two polynomials~$\tilde{Q}_1$ and~$\tilde{Q}_2$ such that
\[\lclm(Q_1,Q_2)=\tilde{Q}_1Q_1=\tilde{Q}_2Q_2.\]
With this notation, the pairs~$(P_1,Q_1)$ and~$(P_2,Q_2)$ are called equivalent when
$\tilde{Q}_1P_1=\tilde{Q}_2P_2$.
This can be verified to be an equivalence relation and the class is called a fraction and denoted~$Q_1^{-1}P_1$ (which is equal to $Q_2^{-1}P_2$). This construction makes the set of fractions a non-commutative field. 

Reduction to the same denominator for sums is given by
\begin{equation}\label{eq:redsum}
Q_1^{-1}P_1+Q_2^{-1}P_2=\lclm({Q}_1,Q_2)^{-1}(\tilde{Q}_1 P_1+\tilde{Q}_2 P_2),
\end{equation}
as can be checked by left multiplication with~$\lclm(Q_1,Q_2)$. 

To compute the reduction of a product of two fractions $Q_1^{-1}P_1,Q_2^{-1}P_2$, the starting point is the lclm of~$Q_2$ and the numerator~$P_1$. There exist two polynomials~$\hat{P_1}$ and~$\hat{P_2}$ such that
\[\lclm(Q_2,P_1)=\hat{P}_1P_1=\hat{Q}_2Q_2.\]
Then, $Q_1^{-1}P_1=(\hat{P}_1Q_1)^{-1}\hat{P}_1P_1$ and $Q_2^{-1}P_2=(\hat{Q}_2Q_2)^{-1}\hat{Q}_2P_2$, so that finally
\begin{equation}\label{eq:redprod}
Q_1^{-1}P_1Q_2^{-1}P_2=(\hat{P}_1Q_1)^{-1}{\hat{Q}_2 P_2}.
\end{equation}

\subsection{Irreducible Fractions}
Having in mind our use of fractions for recurrence operators, we now concentrate on the case when~$\sigma$ is invertible, so that gcld's are available. The results here are probably known, but we did not find them in the literature.

A fraction~$Q^{-1}P$ is called \emph{irreducible} when $\gcld(P,Q)=1$.
\begin{prop}
Assume $\sigma$ is invertible and let~$A^{-1}B$ be a fraction. Then there exists an irreducible fraction equal to $A^{-1}B$. Moreover, its numerator and denominator are unique up to a factor in~$\K$.
\end{prop}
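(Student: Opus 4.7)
For existence, my plan is to factor out a common left factor. I would set $G=\gcld(A,B)$ and write $A=GA'$, $B=GB'$. Then
\[
(GA')^{-1}(GB')=A'^{-1}G^{-1}GB'=A'^{-1}B',
\]
so $A^{-1}B=A'^{-1}B'$. To see the new representation is irreducible, I would apply \eqref{gcld}:
\[
G=\gcld(A,B)=\gcld(GA',GB')=G\cdot\gcld(A',B'),
\]
and since Ore rings have no zero divisors, cancellation forces $\gcld(A',B')=1$.

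For uniqueness, suppose $A_1^{-1}B_1=A_2^{-1}B_2$ with both fractions irreducible. My plan is to unpack the equivalence relation defining fractions and then apply \eqref{gcld} a second time. By definition of the lclm there exist $\tilde A_1,\tilde A_2$ with $\lclm(A_1,A_2)=\tilde A_1A_1=\tilde A_2A_2$, and the equivalence of pairs gives $\tilde A_1B_1=\tilde A_2B_2$. For each $i$ I would then compute, using \eqref{gcld} and irreducibility,
\[
\gcld(\tilde A_iA_i,\tilde A_iB_i)=\tilde A_i\,\gcld(A_i,B_i)=\tilde A_i.
\]
Because the two pairs $(\tilde A_iA_i,\tilde A_iB_i)$ are literally identical for $i=1,2$, their gcld's coincide, hence $\tilde A_1$ and $\tilde A_2$ differ by a right unit: $\tilde A_1=\tilde A_2\,c$ for some $c\in\K^\ast$. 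Substituting into $\tilde A_1A_1=\tilde A_2A_2$ and $\tilde A_1B_1=\tilde A_2B_2$ and cancelling the nonzero factor $\tilde A_2$ should yield $A_2=cA_1$ and $B_2=cB_1$ with the \emph{same} scalar $c$, which is the claimed uniqueness.

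The point that will need the most care is the book-keeping of ``up to a unit'' ambiguities: gcld's are only defined up to right multiplication by an element of $\K^\ast$, and each appeal to \eqref{gcld} introduces such an ambiguity. The argument succeeds only because these ambiguities collapse into a single scalar relating the numerator and the denominator, rather than two independent scalars, which would be too weak a conclusion.
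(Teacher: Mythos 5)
Your proposal is correct and follows essentially the same route as the paper: existence by dividing out $\gcld(A,B)$, and uniqueness by unpacking the equivalence relation via the lclm cofactors and applying \eqref{gcld} to get $\gcld(\tilde A_iA_i,\tilde A_iB_i)=\tilde A_i\gcld(A_i,B_i)=\tilde A_i$. The only (harmless) difference is at the last step: the paper additionally invokes $\gcld(\tilde A_1,\tilde A_2)=1$, forced by the lclm relation, to conclude that both cofactors are units, whereas you obtain the same conclusion by noting that the two pairs coincide so their gcld's are associates; both finish the argument.
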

\begin{proof}
Existence follows from dividing out numerator and denominator by~$\gcld(A,B)$. 
Assume $Q_1^{-1}P_1=Q_2^{-1}P_2$ and  $\gcld(P_1,Q_1)=\gcld(P_2,Q_2)=1$. 
By definition of equivalence, $\tilde{Q}_1P_1=\tilde{Q}_2P_2$, where $\tilde{Q}_1Q_1=\tilde{Q}_2Q_2=\lclm(Q_1,Q_2)$. Moreover this lclm relation implies $\gcld(\tilde{Q}_1,\tilde{Q}_2)=1$.
Now,
\begin{multline*}
\tilde{Q}_1=\tilde{Q}_1\gcld(P_1,Q_1)=\gcld(\tilde{Q}_1P_1,\tilde{Q}_1Q_1)\\
=\gcld(\tilde{Q}_2P_2,\tilde{Q}_2Q_2)=\tilde{Q}_2\gcld(P_2,Q_2)=\tilde{Q}_2,
\end{multline*}
where we use~\eqref{gcld}. But since $\gcld(\tilde{Q}_1,\tilde{Q}_2)=1$, necessarily $\tilde{Q}_1=\tilde{Q}_2=1$ and then
$P_1=P_2$ and $Q_1=Q_2$.
\end{proof}

The following lemma is useful in the computation of recurrences for Chebyshev series.
\begin{lemma} \label{lemirred}
Assume~$\sigma$ is invertible and let $Q_2^{-1}P_2$ be an irreducible fraction and $P_1$ a polynomial. Then $\hat{P}_1^{-1}\hat{Q}_2P_2$ with $\hat{Q}_2Q_2=\hat{P}_1P_1=\lclm(Q_2,P_1)$ is irreducible and equal to $P_1Q_2^{-1}P_2$.
\end{lemma}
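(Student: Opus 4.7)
My plan is in two parts: a one-line derivation of the equality, then a short argument for irreducibility.

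The equality $P_1 Q_2^{-1} P_2 = \hat{P}_1^{-1} \hat{Q}_2 P_2$ is immediate from the product-reduction formula~\eqref{eq:redprod} applied to the fractions $1^{-1} P_1$ and $Q_2^{-1} P_2$: the lclm appearing there is by hypothesis $\lclm(Q_2, P_1) = \hat{Q}_2 Q_2 = \hat{P}_1 P_1$, and \eqref{eq:redprod} produces exactly $\hat{P}_1^{-1} \hat{Q}_2 P_2$.

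For irreducibility, I would set $G := \gcld(\hat{P}_1, \hat{Q}_2 P_2)$ and show $G \in \K^\times$ in three steps. Step one: establish the standard auxiliary fact that the cofactors of the lclm are coprime, $\gcld(\hat{P}_1, \hat{Q}_2) = 1$. Indeed, any common left divisor $D$ with $\hat{P}_1 = D \hat{P}_1'$ and $\hat{Q}_2 = D \hat{Q}_2'$ yields $\hat{P}_1' P_1 = \hat{Q}_2' Q_2$, a common left multiple of $Q_2$ and $P_1$ of degree $\deg \lclm(Q_2, P_1) - \deg D$; minimality of the lclm, together with the additivity of degrees under multiplication in Ore rings, forces $\deg D = 0$. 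Step two: use the lclm relation $\hat{P}_1 P_1 = \hat{Q}_2 Q_2$, the distributive identity~\eqref{gcld}, and $\gcld(P_2, Q_2) = 1$ (from the irreducibility of $Q_2^{-1} P_2$) to compute
\[
\gcld(\hat{P}_1 P_1, \hat{Q}_2 P_2) = \gcld(\hat{Q}_2 Q_2, \hat{Q}_2 P_2) = \hat{Q}_2 \gcld(Q_2, P_2) = \hat{Q}_2.
\]
Step three: since $G$ is a left divisor of $\hat{P}_1$ it divides $\hat{P}_1 P_1$; by definition it also divides $\hat{Q}_2 P_2$; hence $G$ divides their gcld, which by step two is $\hat{Q}_2$. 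Combined with $G \mid \hat{P}_1$ and step one, this forces $G \mid \gcld(\hat{P}_1, \hat{Q}_2) = 1$, so $G$ is a unit.

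The only real obstacle is step one, the coprimality of lclm cofactors, which is not proved earlier in the excerpt and requires the short degree/minimality argument above. The rest is direct bookkeeping using~\eqref{gcld} and the irreducibility hypothesis on $Q_2^{-1}P_2$.
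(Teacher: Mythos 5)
Your proof is correct and takes essentially the same route as the paper's: both reduce irreducibility to showing that $g=\gcld(\hat{P}_1,\hat{Q}_2P_2)$ left-divides $\hat{Q}_2Q_2=\hat{P}_1P_1$, hence divides $\hat{Q}_2\gcld(P_2,Q_2)=\hat{Q}_2$ via~\eqref{gcld}, and then conclude from the coprimality of the lclm cofactors. The only difference is that you supply the short degree argument for $\gcld(\hat{P}_1,\hat{Q}_2)=1$, which the paper simply asserts ``by definition of the lclm.''
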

\begin{proof}
We have  $\gcld(\hat{Q}_2,\hat{P}_1)=1$ by definition of the $\lclm$.
The polynomial $g=\gcld(\hat{P}_1,\hat{Q}_2P_2)$ is a left divisor of $\hat{P_1}{P_1}=\hat{Q_2}Q_2$, and therefore is a left divisor of \[\gcld(\hat{Q}_2P_2,\hat{Q}_2Q_2)=\hat{Q}_2\gcld(P_2,Q_2)=\hat{Q}_2.\] Thus $g$ is a left divisor of both $\hat{P}_1$ and $\hat{Q}_2$, hence is~$1$.
\end{proof}

\section{Recurrences for Chebyshev Coefficients}\label{sec:Chebyshev}
We now have the theoretical tools to prove that a morphism from linear differential operators to linear recurrence operators produces fractions whose numerators give recurrences for the coefficients of Chebyshev series solutions. 

The algorithms then become easy to state, their algorithmic difficulty being concentrated in the Euclidean algorithm in the previous section.

\subsection{Morphism}
We define a morphism of~$\Q$-algebras from $\Q[x]\langle\partial_x;\operatorname{Id},d/dx\rangle$ to the field of fractions of $Q(n)\langle S,S^{-1};S\rangle$ by
\[\varphi(x)=X:=\frac12({S+S^{-1}}),\quad\varphi(\partial_x)=D:=(S^{-1}-S)^{-1}(2n).\]
The proof that $\varphi$ is a well-defined morphism of non-commutative rings reduces to checking the commutation $\varphi(\partial_x x)=\varphi(x\partial_x+1)$. Indeed,
\begin{align*}
XD+1&=\frac12({S+S^{-1}})(S^{-1}-S)^{-1}(2n)+1\\
&=(S^{-1}-S)^{-1}({S+S^{-1}})n+1\\
&=(S^{-1}-S)^{-1}\Bigl(\bigl((n+1)S+(n-1)S^{-1}\bigr)+(S^{-1}-S)\Bigr)\\
&=DX.
\end{align*}

\begin{algorithm}[t]
	\caption{Lewanowicz' algorithm\label{algo:Lewanowicz}}
	\begin{algorithmic}
		\REQUIRE $L:=\sum_{i=0}^k p_i(x)\partial^i$
		\ENSURE $(P,Q)$ such that $\varphi(L)=Q^{-1}P$
		\STATE $P:=p_k(X)$
		\STATE $Q:=1$
		\FORALL {$i$ from $k-1$ to $0$}
			\STATE Compute $\lclm((S^{-1}-S),P)=\hat{P}P=\hat{U}(S^{-1}-S)$.
			\STATE $Q:=\hat{P}Q$
			\STATE $P:=\hat{U}2n+Qp_i(X)$
		\ENDFOR
		\RETURN $(P,Q)$
	\end{algorithmic}
\end{algorithm}
\subsection{Horner's Rule and Lewanowicz' Algorithm}
\begin{prop}\label{horner}
	Let~$L=p_k(x)\partial_x^k+\dots+\partial_xp_0(x)$ be a linear differential operator in $\Q[x]\langle\partial_x;\operatorname{Id},d/dx\rangle$. The evaluation of $\varphi(L)$ by Horner's rule
	\[\varphi(L)=(\dotsb(p_k(X)D+p_{k-1}(X))D+\dotsb)D+p_0(X)\]
using Eqs.~\eqref{eq:redsum} and~\eqref{eq:redprod} for the computation of sums and products produces a fraction~$Q^{-1}P$ that is irreducible.
\end{prop}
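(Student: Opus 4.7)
I plan to proceed by induction on the iterations of the Horner loop in Algorithm~\ref{algo:Lewanowicz}. At initialization, $(P,Q)=(p_k(X),1)$, and any fraction with denominator $1$ is trivially irreducible. Now suppose $Q^{-1}P$ is irreducible at the start of a given iteration. The algorithm computes $\lclm(S^{-1}-S,P)=\hat{P}P=\hat{U}(S^{-1}-S)$, so $\gcld(\hat{P},\hat{U})=1$ by the definition of lclm, and then sets $Q':=\hat{P}Q$ and $P':=\hat{U}(2n)+Q'p_i(X)$. These two updates together implement one step of Horner, via the product formula~\eqref{eq:redprod} followed by the sum formula~\eqref{eq:redsum} with a polynomial summand. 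It remains to show $\gcld(Q',P')=1$.

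Let $g=\gcld(Q',P')$. Since $g$ left-divides $Q'=\hat{P}Q$, it also left-divides $Q'p_i(X)$, hence left-divides $P'-Q'p_i(X)=\hat{U}(2n)$. Now comes the critical observation: $2n$ is a unit in $\Q(n)\langle S,S^{-1};S\rangle$, since its inverse $1/(2n)$ lies in the coefficient field $\Q(n)$, so left-divisibility by $\hat{U}(2n)$ is equivalent to left-divisibility by $\hat{U}$. Hence $g\mid\hat{U}$. Using the lclm relation $\hat{U}(S^{-1}-S)=\hat{P}P$, this gives $g\mid\hat{P}P$. Combined with $g\mid\hat{P}Q$ and identity~\eqref{gcld}, I obtain $g\mid\hat{P}\gcld(P,Q)=\hat{P}$ by the inductive hypothesis. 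Thus $g$ left-divides both $\hat{P}$ and $\hat{U}$, forcing $g\mid\gcld(\hat{P},\hat{U})=1$.

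The main potential obstacle is keeping the non-commutative bookkeeping straight. The identity $\gcld(A,B+AC)=\gcld(A,B)$ and the invariance of left-divisibility under right-multiplication by a unit each need a one-line verification against the skew-commutation rule, and one must check that $P\ne0$ throughout so that the lclm makes sense. Once one notices that $2n$ is a unit in the Laurent ring, however, the whole argument is essentially an inductive application of the idea behind Lemma~\ref{lemirred} together with~\eqref{gcld}, and I do not expect any deeper difficulty.
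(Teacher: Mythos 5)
Your proof is correct and follows essentially the same route as the paper: induction over the iterations of the Horner/Lewanowicz loop, with the inductive step resting on the left-coprimality of the lclm cofactors $\hat{P},\hat{U}$ and on identity~\eqref{gcld}. The only difference is organizational: where the paper invokes Lemma~\ref{lemirred} on the inverse fraction (twice), you inline the same divisibility chase directly and observe explicitly that $2n$ is a unit, which is arguably cleaner.
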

The algorithm deduced from this statement (Algorithm~\ref{algo:Lewanowicz}) is due to Lewanowicz. It is made very clear by the use of fractions of recurrence operators. The proof that the numerator of its output gives a recurrence for the Chebyshev coefficients is given in the next section.
\begin{proof}
We prove that each iteration of the loop produces~$(P,Q)$ that are relatively prime and such that 
\begin{equation}\label{M_i}
	Q^{-1}P=:M_i=\varphi(p_k(x)\partial^{k-i}+\dots+p_i).
\end{equation}
Initially, $i=k$ and $\varphi(p_k(x))$ is a polynomial, so that $Q=1$ and the property holds.
If it holds for~$M_i$, the next stage of the loop computes $Q^{-1}PD+p_{i-1}(X)$.
Recall that~$D=(S^{-1}-S)^{-1}(2n)$. Then let $\lclm((S^{-1}-S),P)=\hat{P}P=\hat{U}(S^{-1}-S)$. Lemma~\ref{lemirred} applied to the inverse $(S^{-1}-S)P^{-1}Q$ implies that~$\gcld(\hat{P}Q,\hat{U})=1$. It follows that~$\gcld(\hat{P}Q,\hat{U}+\hat{P}Qp_{i-1}(X)/(2n))=1$. Again by Lemma~\ref{lemirred} applied to the inverse, multiplying by~$2n$ on the right preserves irreducibility and the property holds for~$M_{i-1}$.
\end{proof}
We quote without proof the following result.
\begin{prop}[Lewanowicz]\label{prop:Lewanowicz}
When the leading coefficient $p_k(x)$ of the differential equation does not vanish at~$1$ or~$-1$, then all the gcrd's are trivial, $Q=D^{-i}$ at step~$i$ and the resulting~$Q$ is~$D^{-k}$.
\end{prop}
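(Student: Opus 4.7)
I would proceed by induction on the number $t$ of iterations performed (so $t=0$ before the loop and $t=k$ at the end), maintaining the joint invariant: (a)~$\gcrd(S^{-1}-S,P)=1$, and (b)~$Q=\Delta_t$, where $\Delta_t$ denotes the denominator of the irreducible representation of $D^t$ in the fraction field of $\Q(n)\langle S,S^{-1};S\rangle$. Reading ``$Q=D^{-t}$ at step~$t$'' as $Q=\Delta_t$, this yields the three assertions of the proposition, the last being the case $t=k$.

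The key algebraic tool is a criterion for right-divisibility by linear factors. Writing $A=\sum_j A_j(n)S^j$ in the Laurent Ore ring and setting $A|_{S=c}:=\sum_j c^j A_j(n)$ for $c\in\Q^*$, a direct right Euclidean division shows that $(S-c)$ is a right divisor of~$A$ iff $A|_{S=c}\equiv 0$. Combined with the factorization $S^{-1}-S=-S^{-1}(S-1)(S+1)$, invariant~(a) reduces to $P|_{S=1}\neq 0$ and $P|_{S=-1}\neq 0$. The base case is immediate: $P=p_k(X)$, $Q=1=\Delta_0$, and $X|_{S=\pm 1}=\pm 1$, so $P|_{S=\pm 1}=p_k(\pm 1)\neq 0$ by hypothesis.

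For the inductive step, (a) on the old $P$ forces $\lclm(S^{-1}-S,P)$ to have maximal size, so $\hat P$ is a degree-two Laurent polynomial. A direct comparison shows that this $\hat P$ coincides with the cofactor appearing in the update $\Delta_t\mapsto\hat P\,\Delta_t=\Delta_{t+1}$ for the denominator of $D^{t+1}$, establishing~(b) after the iteration. Recovering~(a) on $P_{\text{new}}=\hat U\cdot 2n+Q_{\text{new}}\,p_i(X)$ uses that $Q_{\text{new}}=\Delta_{t+1}$ is right-divisible by both $S-1$ and $S+1$ (an easy induction from the factorizations of the~$\Delta_t$'s), while $p_i(X)$ commutes with~$S$; hence the term $Q_{\text{new}}\,p_i(X)$ evaluates to zero at $S=\pm 1$, reducing the question to $(\hat U\cdot 2n)|_{S=\pm 1}\neq 0$.

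The main obstacle is this last non-vanishing. My approach is to combine the lclm identity $\hat U(S^{-1}-S)=\hat P\,P_{\text{old}}$ with the inductive hypothesis $P_{\text{old}}|_{S=\pm 1}\neq 0$: formally right-dividing both sides by $(S\mp 1)$ (a right factor of $S^{-1}-S$ up to a unit) expresses $\hat U|_{S=\pm 1}$ as $P_{\text{old}}|_{S=\pm 1}$ times a nonzero scalar coming from $\hat P$, and the extra multiplication by~$2n$ then preserves non-vanishing since $2n\neq 0$ in $\Q[n]$. The delicate bookkeeping is the passage of $(S\mp 1)$ across $\hat P$ and $\hat U$ using the non-commutation $Sn=(n+1)S$.
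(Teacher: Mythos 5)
First, note that the paper offers nothing to compare against here: this proposition is explicitly ``quoted without proof'' from Lewanowicz, the only hint being the identity $4(X^2-1)=(S^{-1}-S)^2$. Your attempt therefore has to stand on its own, and as written it has two genuine gaps, both at load-bearing points. The first is invariant~(b). You assert that ``a direct comparison shows that this $\hat P$ coincides with the cofactor appearing in the update $\Delta_t\mapsto\hat P\Delta_t=\Delta_{t+1}$,'' but there is no fixed cofactor to compare with: $\hat P$ is determined by $\lclm(S^{-1}-S,P)=\hat PP$ and depends on all of $P$, not merely on the triviality of the gcrd. Knowing that $\hat P$ has span $2$ in $S$ gives only that $\hat PQ$ has the same span as $I^{t+1}$, not that they are left associates. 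Concretely, for $L=\partial_x^2+1$ the second iteration has $P=2n$ and $\hat P=(S^{-1}-S)\tfrac{1}{2n}=\tfrac{1}{2(n-1)}S^{-1}-\tfrac{1}{2(n+1)}S$, which is not a $\Q(n)$-multiple of $S^{-1}-S$; the product $\hat P(S^{-1}-S)$ does turn out to be proportional to $I^2$, but that is precisely what must be proved. The natural way to close this gap is to combine the irreducibility of $Q^{-1}P$ (Proposition~\ref{horner}) with the uniqueness of irreducible fractions, comparing against the explicit representation $M=(I^t)^{-1}N_t$ with $N_t=\sum_m I^{t-m}q_m(X)$ in Paszkowski's form --- but that requires showing $\gcld(I^t,N_t)=1$, a statement about \emph{left} divisors that your argument never addresses.

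The second gap is the final non-vanishing $(\hat U\cdot 2n)|_{S=\pm1}\neq0$. Your evaluation criterion for right division by $S-c$ is correct, but the map $A\mapsto A|_{S=c}$ is not compatible with right multiplication by non-constant functions of $n$: one has $(A\,b(n))|_{S=c}=\sum_jA_j(n)b(n+j)c^j$, which is not $A|_{S=c}\,b(n)$. For instance $S-\tfrac{n+1}{n}$ does not vanish at $S=1$, yet $(S-\tfrac{n+1}{n})\cdot n=(n+1)(S-1)$ does; so ``multiplication by $2n$ preserves non-vanishing'' is false in general, and the condition you actually need is that $S\mp\tfrac{n+1}{n}$ (not $S\mp1$) fails to right-divide $\hat U$. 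Likewise, ``right-dividing'' the identity $\hat U(S^{-1}-S)=\hat PP$ by $S\mp1$ yields no information as stated: both sides are automatically right-divisible by $S\mp1$ (the left-hand side because $S^{-1}-S=-S^{-1}(S-1)(S+1)$ is), so evaluation gives $0=0$; extracting anything requires comparing the quotients, whose evaluations again fail to be multiplicative. A cleaner route to~(a), once (b) is secured, is to bypass $\hat U$ entirely: $P$ is then a left unit times $N_t=\sum_mI^{t-m}q_m(X)$, each summand with $m<t$ is right-divisible by $(S-1)(S+1)$ because $I\,q(X)=\tfrac{1}{2n}q(X)(S^{-1}-S)$, and hence $N_t|_{S=\pm1}=q_t(\pm1)=p_k(\pm1)\neq0$. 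Your overall skeleton (induction on the loop, the evaluation criterion, the vanishing of the $Qp_i(X)$ term at $S=\pm1$) is sound, but the two steps above are the substance of the proposition and remain unproved.
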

This is related to the fact that~$4(X^2-1)=(S^{-1}-S)^2$.
\subsection{Chebyshev Expansions}
\subsubsection{Main Theorem}
We now prove our main result: the morphism defined above behaves as expected with respect to Chebyshev expansions.
\begin{theo}\label{main}
Let $L=p_0(x)+\dots+p_k(x)\partial_x^k$ be a linear differential operator of order~$k$ with polynomial coefficients. Let $f\in\mathcal{C}^k(\left]-1,1\right[)$
be such that either of the following hypotheses holds:
\begin{align*}
\tag{H}\int_{-1}^1{\frac{f^{(k)}(x)}{\sqrt{1-x^2}}\,dx}&\text{ is convergent;}\\
\tag{H'}\int_{-1}^1{\frac{(1-x^2)^kf^{(k)}(x)}{\sqrt{1-x^2}}\,dx}&\text{\ is convergent
and $(1-x^2)^i|p_i$, $i=0,\dots,k$}.
\end{align*}
Then $f$ admits a Chebyshev expansion $\sum^\star{u_nT_n}$,
$L\cdot f$ admits a Chebyshev expansion $\sum^\star{v_nT_n}$ and the sequences $(u_n)$ and $(v_n)$ are related by $P\cdot u_n=Q\cdot v_n$, for any~$(P,Q)$ such that $Q^{-1}P=\varphi(L)$. In particular, if~$L\cdot f=0$, then the Chebyshev coefficients of~$f$ satisfy $P\cdot u_n=0$ for any numerator of~$\varphi(L)$.
\end{theo}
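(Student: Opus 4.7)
The plan is to reduce the statement to two atomic identities on Chebyshev coefficients, corresponding to the generators $x$ and $\partial_x$ of the differential algebra, and then propagate them through arbitrary $L$ by induction along the Horner evaluation of $\varphi(L)$ used in Algorithm~\ref{algo:Lewanowicz}, using the sum and product rules~\eqref{eq:redsum}--\eqref{eq:redprod} of fractions.

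The first atomic identity, ``if $f=\sum^\star u_nT_n$ and $xf=\sum^\star v_nT_n$, then $v_n=X\cdot u_n$,'' is immediate from the three-term recurrence~\eqref{eq:Cheb1}. The second, ``if $f=\sum^\star u_nT_n$ and $f'=\sum^\star w_nT_n$, then $(S^{-1}-S)\cdot w_n=2n\cdot u_n$,'' encodes $D=(S^{-1}-S)^{-1}(2n)$ and I would prove it by substituting $x=\cos\theta$, writing
\[w_{n-1}-w_{n+1}=\frac{2}{\pi}\int_0^\pi f'(\cos\theta)\bigl(\cos((n-1)\theta)-\cos((n+1)\theta)\bigr)\,d\theta,\]
collapsing the difference to $2\sin(n\theta)\sin\theta$, and integrating by parts to move the derivative onto $f(\cos\theta)$; the boundary terms vanish thanks to the $\sin(n\theta)$ factor, and what remains is $2n\cdot u_n$.

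Under~(H), iterating this integration by parts shows that each derivative $f^{(j)}$, $j\le k$, admits a Chebyshev expansion; under~(H'), the divisibility $(1-x^2)^i\mid p_i$ compensates the non-integrability of $f^{(i)}$ at the endpoints, and the same argument with the $(1-x^2)$ factors absorbed into the Chebyshev weight produces the expansion of each $p_i(x)f^{(i)}(x)$. With existence in hand, I would then prove by downward induction on $i$, following exactly the loop of Algorithm~\ref{algo:Lewanowicz}, that the current pair $(P,Q)$ satisfies $P\cdot u_n=Q\cdot w_n^{(i)}$, where $w_n^{(i)}$ are the Chebyshev coefficients of $(p_k\partial^{k-i}+\cdots+p_i)\cdot f$. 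The inductive step applies the atomic identities coordinate-wise and then reorganizes them through~\eqref{eq:redprod} (for the new $D$) and~\eqref{eq:redsum} (for the new summand~$p_{i-1}(X)$); this is exactly what Algorithm~\ref{algo:Lewanowicz} computes.

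The pair $(P_0,Q_0)$ returned by Horner's rule is irreducible by Proposition~\ref{horner}, so for any other $(P,Q)$ with $Q^{-1}P=\varphi(L)$, the absence of zero divisors in $\Q(n)\langle S,S^{-1};S\rangle$ combined with irreducibility of $(P_0,Q_0)$ forces $P=CP_0$ and $Q=CQ_0$ for some operator~$C$, so $P\cdot u_n=C\cdot(P_0\cdot u_n)=C\cdot(Q_0\cdot v_n)=Q\cdot v_n$. The main obstacle I expect is the analysis under hypothesis~(H'): verifying that the divisibility conditions on the~$p_i$ are enough to make \emph{every} intermediate Chebyshev expansion converge, not just the final one, so that the algebraic manipulation in the fraction field faithfully mirrors operations on actual Chebyshev series.
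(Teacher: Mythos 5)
Your treatment of hypothesis (H) follows the paper's route: the two atomic identities (for $x$ and for $\partial_x$, the latter by the same integration by parts after the substitution $x=\cos\theta$), an induction along the Horner evaluation of $\varphi(L)$, and the reduction from an arbitrary pair $(P,Q)$ to the irreducible one via a common left factor, which is legitimate by the uniqueness statement for irreducible fractions. That part is sound.

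The gap is in case (H'). Your induction along the loop of Algorithm~\ref{algo:Lewanowicz} inserts $D=\varphi(\partial_x)$ at every step, so the inductive step must relate the Chebyshev coefficients of an intermediate function $g$ to those of $g'$ --- in particular, at the first such step, it needs the Chebyshev coefficients of $f'$ itself to exist. Under (H') they need not: for $f=(1-x^2)^{-1/4}$ (Example~\ref{ex-5}) the integral $\int_{-1}^1 f'(x)(1-x^2)^{-1/2}\,dx$ diverges, so the sequence $w_n$ in your second atomic identity is undefined and the relation $(S^{-1}-S)\cdot w_n=2n\cdot u_n$ has no meaning. Observing that the divisibility $(1-x^2)^i\mid p_i$ makes the \emph{final} combination $p_i(x)f^{(i)}(x)$ integrable against the weight does not repair this, because the obstruction sits in the intermediate series --- exactly where you anticipated trouble. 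The paper's fix is not a refinement of the estimates but a change of atomic operator: under (H') one rewrites $L=q_k(x)\theta^k+\dots+q_0(x)$ with $\theta=(1-x^2)\partial_x$ (possible precisely because of the divisibility hypothesis), proves the basic identity for $\theta$ from $(-2\sqrt{1-x^2}\,T_n)'=\bigl((n+1)T_{n+1}-(n-1)T_{n-1}\bigr)/\sqrt{1-x^2}$, and runs the Horner induction on powers of $\theta$; since $\varphi(\theta)=(S-S^{-1})n/2$ is a genuine polynomial, no expansion of a bare derivative $f^{(i)}$ is ever invoked. Without this restructuring the (H') half of the theorem is not proved.
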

The easy case is when (H) holds. Hypothesis (H') makes it possible to deal with some functions that are singular at $\pm1$, but whose singularities are not ``too bad'': they are regular singular points. 
\begin{proof}
First, convergence of the integral in (H) or (H') implies convergence of the analogous integral where $f^{(k)}$ is replaced by~$f^{(i)}$ for~$i=k-1,\dots,0$ as well as the integrals where these functions are multiplied by~$T_n(x)$, $n\in\N$. This shows that both~$f$ and~$L\cdot f$ admit Chebyshev expansions.

If the result holds for any numerator of~$\varphi(L)$ then in particular it has to hold for the numerator of its irreducible form.
Conversely, if~$P\cdot u_n=Q\cdot v_n$, then $RP\cdot u_n=RQ\cdot v_n$ for any~$R$, so that it is also sufficient to prove the result for an irreducible~$\phi(L)$.

\begin{lemma}[Basic Cases] Under the same hypotheses, the result holds for $L$ a constant times identity, $L=x$, $L=\partial_x$ if (H) holds, $L=(1-x^2)\partial_x$ if (H') holds.
\end{lemma}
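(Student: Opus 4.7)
My plan is a case-by-case verification in which the main tool is the substitution $x=\cos\theta$, which turns the integral formula for $u_n$ into $u_n=\tfrac{2}{\pi}\int_0^\pi f(\cos\theta)\cos(n\theta)\,d\theta$ and similarly for $v_n$ with $f$ replaced by $L\cdot f$. The cases where $L$ is a scalar or $L=x$ are purely algebraic: for $L=x$ one expands $xf(x)=\sum^\star v_n T_n(x)$ using the three-term identity \eqref{eq:Cheb1}, handling the indices $n=0,1$ via the $\sum^\star$ convention and the natural even extension $u_{-n}=u_n$ (which is forced by $T_{-n}=T_n$); this yields $2v_n=u_{n-1}+u_{n+1}$, matching $Q\cdot v_n = P\cdot u_n$ with $(P,Q)=(S+S^{-1},2)=\varphi(x)$ up to factoring $Q$.

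For $L=\partial_x$ under (H), I would form the target combination $v_{n-1}-v_{n+1}$, apply the product-to-sum identity $\cos((n-1)\theta)-\cos((n+1)\theta)=2\sin\theta\sin(n\theta)$ inside the integrand, rewrite $\sin\theta\,f'(\cos\theta)=-\tfrac{d}{d\theta}f(\cos\theta)$, and integrate by parts in $\theta$. The boundary term $[f(\cos\theta)\sin(n\theta)]_0^\pi$ vanishes because $\sin$ vanishes at $0$ and $\pi$, and what remains is precisely $2n\,u_n$, as required by the numerator of $\varphi(\partial_x)=D=(S^{-1}-S)^{-1}(2n)$. For $L=(1-x^2)\partial_x$ under (H'), the Chebyshev weight $1/\sqrt{1-x^2}$ combines with the factor $1-x^2=\sin^2\theta$ to give $v_n=-\tfrac{2}{\pi}\int_0^\pi \sin\theta\,g'(\theta)\cos(n\theta)\,d\theta$ with $g(\theta):=f(\cos\theta)$; one integration by parts together with the product-to-sum identities for $\cos\theta\cos(n\theta)$ and $\sin\theta\sin(n\theta)$ produces $2v_n=(n+1)u_{n+1}-(n-1)u_{n-1}$. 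This is exactly the relation carried by the fraction $\varphi((1-x^2)\partial_x)=(1-X^2)D=\tfrac{1}{2}\bigl((n+1)S-(n-1)S^{-1}\bigr)$, obtained by a short algebraic simplification using the identity $1-X^2=-(S-S^{-1})^2/4$ already noted in the introduction.

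The main obstacle is analytic rather than algebraic: every integration by parts must be justified even when $f$ may have integrable singularities at $\pm 1$, and this is precisely the role of (H) and (H'). Under (H), convergence of $\int_0^\pi f'(\cos\theta)\,d\theta$ together with the boundedness of $\cos(n\theta)$ ensures that all integrals manipulated are absolutely convergent and that $f(\cos\theta)\sin(n\theta)$ tends to $0$ at the endpoints; under (H'), the extra factor $\sin^{2k}\theta$ compensates for a possible regular-singular behaviour of $f^{(k)}$ at $\pm 1$, and the vanishing of $\sin$ at $0$ and $\pi$ again kills the boundary contributions. Checking these convergence and endpoint conditions carefully—and in particular explaining why (H') suffices for $(1-x^2)\partial_x$ but not for the bare $\partial_x$—is the delicate part of the argument.
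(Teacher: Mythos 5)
Your proposal is correct and follows essentially the same route as the paper: the scalar and $L=x$ cases are handled algebraically via the three-term identity, and the two derivative cases by a single integration by parts whose boundary terms vanish under (H) or (H'), yielding exactly the relations $2n\,u_n=v_{n-1}-v_{n+1}$ and $2v_n=(n+1)u_{n+1}-(n-1)u_{n-1}$. The only difference is cosmetic: you work in the variable $\theta=\arccos x$ with product-to-sum formulas, where the paper differentiates the corresponding combinations of Chebyshev polynomials directly in $x$ (and your formulation has the small advantage of covering $n=0$ without the separate check $v_{-1}=v_1$).
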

\begin{proof}
If $L=\lambda$ is a constant times identity then $P=\lambda$, $Q=1$ and $v_n=\lambda u_n$ clearly holds.

If $L=x$,  Eq.~\eqref{eq:Cheb1} implies
\[
v_n=\frac{2}{\pi}\int_{-1}^{1}{\frac{f(x)xT_n(x)}{\sqrt{1-x^2}}\,dx}
=\frac{2}{\pi}\int_{-1}^{1}{\frac{f(x)(T_{n+1}(x)+T_{n-1}(x))}{2\sqrt{1-x^2}}\,dx}
=X\cdot u_n.
\]

If $L=\partial_x$ and (H) holds, 
we use the following variant of Eq.~\eqref{eq:Cheb2} when $n\neq0$
\[\frac{T_n}{\sqrt{1-x^2}}=\left(\frac{T_{n+1}-T_{n-1}}{2n\sqrt{1-x^2}}\right)'=-\frac{1}{n^2}(\sqrt{1-x^2}T_n')'\]
that can be checked from~\eqref{defTn}. The continuity of $f'$ and the convergence of the integral in (H) imply that integrating by parts is possible and this gives
\begin{align*}
u_n&=\frac{2}{\pi}\int_{-1}^{1}{\frac{f(x)T_n(x)}{\sqrt{1-x^2}}\,dx}\\
   &=\left[-\frac{2f(x)\sqrt{1-x^2}T_n'}{{\pi n^2}}\right]_{-1}^1
\frac{2}{\pi}\int_{-1}^1{\frac{f'(x)(T_{n-1}(x)-T_{n+1}(x))}{2n\sqrt{1-x^2}}\,dx}
=D^{-1}\cdot v_n.
\end{align*}
Both limits of the term between brackets are~0, by convergence of the integral~$u_n$.

The case when $n=0$ reduces to checking $v_{-1}=v_1$, that does not depend on~$f$.

If $L=(1-x^2)\partial_x$ and (H') holds, we start from
\[(-2\sqrt{1-x^2}T_n(x))' = \frac{(n+1)T_{n+1}(x)-(n-1)T_{n-1}(x)}{\sqrt{1-x^2}}.\]
An argument similar to the previous one then gives
\begin{align*}
	(n+1)u_{n+1}-(n-1)u_{n-1}
	&=\frac{2}{\pi}\int_{-1}^1{f(x) \frac{(n+1)T_{n+1}(x)-(n-1)T_{n-1}(x)}{\sqrt{1-x^2}} dx}\\
	      &=2 \frac{2}{\pi}\int_{-1}^1{(1-x^2)f'(x)\frac{T_n}{\sqrt{1-x^2}} dx}= 2 v_n,
\end{align*}
which proves the result since~$\phi((1-x^2)\partial_x)=(1-X^2)D=(S-S^{-1})n/2$.
\end{proof}
\begin{lemma}[Product]\label{product}
Assume the result holds for~$L_2$ with~$f$, as well as for another operator~$L_1$ with $L_2\cdot f$.
Let~$\varphi(L_1)=Q_1^{-1}P_1$ and~$\varphi(L_2)=Q_2^{-1}P_2$, these fractions being irreducible.
Let $\lclm(Q_2,P_1)=\hat{P}_1P_1=\hat{Q}_2Q_2$, and assume $(\hat{P}_1Q_1)^{-1}\hat{Q}_2P_2$ is irreducible. Then the result holds for~$L_1L_2$ with~$f$.
\end{lemma}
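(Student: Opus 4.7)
The plan is to chain the two hypotheses using the product formula~\eqref{eq:redprod} for fractions. Since $\varphi$ is a ring morphism, $\varphi(L_1L_2) = \varphi(L_1)\varphi(L_2) = Q_1^{-1}P_1 Q_2^{-1}P_2$, and by~\eqref{eq:redprod} together with the given $\lclm$ factorization $\lclm(Q_2,P_1)=\hat{P}_1P_1=\hat{Q}_2Q_2$, this equals $(\hat{P}_1 Q_1)^{-1} \hat{Q}_2 P_2$. The lemma assumes this fraction is irreducible, so by the reduction argument recalled at the start of the proof of Theorem~\ref{main} (``it is sufficient to prove the result for an irreducible $\varphi(L)$''), it suffices to establish the single identity
\[
\hat{Q}_2 P_2 \cdot u_n = \hat{P}_1 Q_1 \cdot v_n.
\]

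First I would name $(w_n)$ the Chebyshev coefficients of the intermediate function $L_2\cdot f$; these exist by the hypothesis that the theorem already holds for $L_2$ with $f$, which also provides $P_2\cdot u_n = Q_2\cdot w_n$. Left-multiplying this identity by $\hat{Q}_2$ and invoking the $\lclm$ relation $\hat{Q}_2 Q_2 = \hat{P}_1 P_1$ rewrites it as
\[
\hat{Q}_2 P_2 \cdot u_n = \hat{P}_1 P_1 \cdot w_n.
\]

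Next I would apply the second hypothesis, namely that the theorem holds for $L_1$ acting on $L_2\cdot f$, giving $P_1\cdot w_n = Q_1\cdot v_n$ where $(v_n)$ are the Chebyshev coefficients of $(L_1L_2)\cdot f$ (these exist by the same hypothesis). Left-multiplying by $\hat{P}_1$ turns the right-hand side of the previous display into $\hat{P}_1 Q_1 \cdot v_n$, yielding the desired identity.

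The argument is essentially bookkeeping, once one observes that $\varphi(L_1 L_2)$ computed by Eq.~\eqref{eq:redprod} gives precisely $(\hat{P}_1 Q_1)^{-1}\hat{Q}_2 P_2$. I expect no serious obstacle: the existence of all three Chebyshev expansions is handed to us by the two hypotheses, and no fresh analytic input is required. The only point demanding care is the passage between non-irreducible and irreducible fractions, i.e., that the irreducibility assumption on $(\hat{P}_1 Q_1)^{-1} \hat{Q}_2 P_2$ is exactly what lets us apply the reduction step of Theorem~\ref{main} to reduce the goal to the single displayed identity above.
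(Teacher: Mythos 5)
Your proof is correct and follows essentially the same route as the paper's: the paper's own argument is exactly the chain $\hat{P}_1Q_1\cdot w_n=\hat{P}_1P_1\cdot v_n=\hat{Q}_2Q_2\cdot v_n=\hat{Q}_2P_2\cdot u_n$ obtained by left-multiplying the two hypothesized identities by the lclm cofactors (with the roles of $v$ and $w$ swapped relative to your labeling). Your version merely spells out more explicitly the role of Eq.~\eqref{eq:redprod} and of the irreducibility assumption, which the paper leaves implicit.
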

\begin{proof}
Let $v_n,w_n,u_n$ be related by $P_1\cdot v_n=Q_1\cdot w_n$, $P_2\cdot u_n=Q_2\cdot v_n$. Then
\[\hat{P}_1Q_1\cdot w_n=\hat{P}_1P_1\cdot v_n=\hat{Q}_2Q_2\cdot v_n=\hat{Q}_2P_2\cdot u_n,\]
whence the result.
\end{proof}
As a consequence, the result holds when~$L=\lambda x^i$ is a monomial, by induction.



\begin{lemma}[Sum]\label{sum}Assume the result holds for an operator~$L$ with $f$ and for a polynomial~$p$ with the same~$f$. Then it holds for $L+p$ with $f$.
\end{lemma}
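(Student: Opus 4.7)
The plan is to reduce the statement to a direct algebraic computation on the fractions $\varphi(L)$, $\varphi(p)$ and $\varphi(L+p)$, combined with the linearity of the Chebyshev expansion with respect to~$f$.

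First, I would write $\varphi(L)=Q^{-1}P$ for some (not necessarily irreducible) pair, and observe that since $p\in\Q[x]$, its image $\varphi(p)=p(X)$ is itself a recurrence operator, viewed as the fraction $1^{-1}p(X)$. Applying~\eqref{eq:redsum} with $Q_2=1$ gives $\lclm(Q,1)=Q$, whence
\[
\varphi(L+p)=\varphi(L)+\varphi(p)=Q^{-1}\bigl(P+Q\,p(X)\bigr).
\]
Thus a numerator--denominator pair for $\varphi(L+p)$ is $(P+Q\,p(X),Q)$, and by the reduction step already recorded at the start of the proof of Theorem~\ref{main} it suffices to establish the conclusion for this particular representative.

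Second, denote by $u_n$, $v_n$, $w_n$ the Chebyshev coefficients of $f$, $L\cdot f$ and $p\cdot f$, whose existence is granted by the hypotheses of the lemma. Linearity of the integrals defining these coefficients shows that $(L+p)\cdot f$ admits the Chebyshev expansion with coefficients $z_n=v_n+w_n$. The two assumptions give $P\cdot u_n=Q\cdot v_n$ and $p(X)\cdot u_n=w_n$; left-multiplying the second by~$Q$ and adding yields
\[
\bigl(P+Q\,p(X)\bigr)\cdot u_n = Q\cdot v_n + Q\cdot w_n = Q\cdot z_n,
\]
which is exactly the required relation.

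I do not foresee any genuine obstacle: the argument is purely formal once $\varphi(L+p)$ has been put over a common denominator via~\eqref{eq:redsum}. The only small point to keep in mind is that $\varphi(p)$ is already a bare operator (denominator~$1$), so that no nontrivial $\lclm$ computation arises, and conversely that the leading order of $L+p$ coincides with that of~$L$, so that hypotheses (H) or (H') for~$f$ transfer unchanged from~$L$ to~$L+p$.
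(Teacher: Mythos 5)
Your computation is the same as the paper's, but there is a gap in the final logical step. ``The result'' for $L+p$ means that $P'\cdot u_n=Q'\cdot z_n$ holds for \emph{every} pair $(P',Q')$ with $Q'^{-1}P'=\varphi(L+p)$, and the reduction recorded at the start of the proof of Theorem~\ref{main} only licenses deducing this from the \emph{irreducible} representative: if $P\cdot u_n=Q\cdot z_n$ with $\gcld(P,Q)=1$, then $RP\cdot u_n=RQ\cdot z_n$ for every $R$, and every representative of the fraction has that form. It does \emph{not} say that verifying the relation for one arbitrary representative suffices. If $(P+Qp(X),Q)$ had a nontrivial common left factor $g$, so that $P+Qp(X)=gA$ and $Q=gB$ with $B^{-1}A=\varphi(L+p)$, the relation $gA\cdot u_n=gB\cdot z_n$ would not let you conclude $A\cdot u_n=B\cdot z_n$: a nonzero recurrence operator can annihilate a nonzero sequence, so left factors cannot be cancelled when acting on sequences. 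Since you explicitly allow $(P,Q)$ to be non-irreducible, your representative $(P+Qp(X),Q)$ may well be reducible, and the appeal to the reduction step breaks down exactly where the content of the lemma lies.

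The fix is the one line the paper is careful to include: take $Q^{-1}P$ to be the irreducible representative of $\varphi(L)$ (the induction hypothesis gives $P\cdot u_n=Q\cdot v_n$ for it in particular) and observe that the new pair is still irreducible, because any common left divisor of $Q$ and $P+Qp(X)$ is a left divisor of $P=(P+Qp(X))-Qp(X)$, whence $\gcld(Q,P+Qp(X))=\gcld(Q,P)=1$. With that added, your argument coincides with the paper's. (Your closing remark about (H)/(H') transferring from $L$ to $L+p$ is not needed: the lemma's hypotheses already grant the expansions of $L\cdot f$ and $p\cdot f$, and $z_n=v_n+w_n$ by linearity of the defining integrals, as you note.)
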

\begin{proof}
Let $\phi(L)=Q^{-1}P$ be irreducible.
If~$P\cdot u_n=Q\cdot v_n$, $w_n=p(X)\cdot u_n$, then
\[Q\cdot(v_n+w_n)=(P+Qp(X))\cdot u_n.\] 
This proves the property for $L+p$ since $\gcld(Q,P+Qp(X))=\gcld(Q,P)=1$. 
\end{proof}
The result now holds for~$L$ an arbitrary polynomial, as a sum of its monomials.

Let finally $\theta=\partial_x$ if (H) holds and $\theta=(1-x^2)\partial_x$ if (H') does. In both cases, $L$ can be written
$q_k(x)\theta^k+\dots+q_0(x)$ with polynomial $q_k,\dots,q_0$. The hypothesis on~$f$ implies that the result holds for~$L=1$ with $\theta^i\cdot f$ for $i=0,\dots,k$ and therefore also for $L=q_i(x)$ with $\theta^i\cdot f$ by Lemma~\ref{product}.

Let $L_{k}=q_k(x)$ and $L_{i}=L_{i+1}\theta+q_{i}$ for $i=k-1,\dots,0$. Let 
$\phi(L_i)=Q_i^{-1}P_i$ be irreducible.
We prove by induction that the result holds for $L_i$ with $\theta^if$.
For $i=k$, the result has just been proved.
If the result holds for~$L_{i+1}$ with $\theta^{i+1}f$, then we obtain an irreducible $Q_{i+1}^{-1}P_{i+1}\theta$: when $\theta=\partial_x$ this follows from Lemma~\ref{product}, while when~$\theta=(1-x^2)\partial_x$, $L_{i+1}$ itself is a polynomial (by induction). Thus the result holds for~$L_{i+1}\theta$ with $\theta^if$. Since it also holds for $q_i(x)$ with $\theta^if$ and $q_i(x)$ is a polynomial, we get the result for their sum by Lemma~\ref{sum}. Thus by induction the result holds for~$L_0$ with~$f$, which concludes the proof of Theorem~\ref{main}.
\end{proof}

\subsubsection{Examples}
\begin{example}
The function $\exp(x)$ satisfies (H) for any~$k$. This proves the recurrence~\eqref{recIn} computed in Example~\ref{example exp}.
\end{example}
\begin{example}\label{ex-5}
The function $(1-x^2)^{-1/4}$ is annihilated by~$2(1-x^2)\partial_x-x$. Hypothesis (H) does not hold, but (H') does. Application of the morphism gives $P=(2n+3)S^2-(2n+1)$, $Q=-2S$, so that the theorem asserts that the Chebyshev coefficients satisfy 
\begin{equation}\label{eq coeffs ex-5}(2n+3)c_{n+2}=(2n+1)c_n.
\end{equation} The actual values can be computed by standard properties of the Beta integrals and indeed
\[c_n=\begin{cases} 
	0 &\text{if $n$ is odd,} \\ \frac{2\Gamma\left(\frac{n}{2}+\frac{1}{4}\right)}{\sqrt{\pi}\Gamma\left(\frac{n}{2}+\frac{3}{4}\right)} & \text{otherwise.}
\end{cases}\]
\end{example}
\begin{example}\label{ex arccos} The function $\arccos x$ gives an example showing that analytic hypothesis such as (H) or (H') are necessary. This function is annihilated by $L=(1-x^2)\partial_x^2-x\partial_x$. Direct application of the morphism gives $P=n^2$, $Q=1$, which would suggest that the recurrence is $n^2c_n=0$. However, neither (H) nor (H') holds in this case. Left multiplying $L$ by $(1-x^2)$ gives a new operator such that (H') holds. Then the theorem proves that the coefficients are annihilated by
\begin{equation}\label{eq arccos}
(n+4)^2S^4-2(n+2)^2+n^2.
\end{equation}
This can be checked against the actual coefficients:
\[c_n=\begin{cases} 
	\pi&\text{if $n=0$},\\
	0 &\text{if $n>0$ is even,} \\
	-\frac{4}{n^2\pi} & \text{otherwise.}
\end{cases}\]
\end{example}

\section{Algorithms}\label{sec:algo}
We now cast the algorithms of 
Paszkowski \cite{Paszkowski1975} and Rebillard~\cite{Rebillard1998} as computations of the numerator of a fraction of recurrence operators. We also propose a new faster algorithm. All three algorithms compute the same recurrence. Starting from
\begin{equation}\label{eq:L}
L=\sum_{i=0}^k p_i(x)\partial_x^i,
\end{equation}
they avoid the need for fractions by replacing differentiations by integrations, exploiting the polynomial
\[\Int:=D^{-1}=\left(\frac{1}{2n}\right)\left(-S+S^{-1}\right).\]
These algorithms compute the polynomial~$\Int^{k}\varphi(L)$, that is a numerator of~$\varphi(L)=\Int^{-k}\Int^k\varphi(L)$. Thus, by Theorem~\ref{main}, their result is a recurrence operator annihilating the coefficients of Chebyshev series solutions of~$L$. 

If~$p_k(1)p_k(-1)\neq0$, Proposition~\ref{prop:Lewanowicz} shows that~$\Int^k$ is the denominator of the irreducible fraction and therefore in that case all algorithms compute the irreducible fraction.
Otherwise, the result of these algorithms may have larger order than that returned by Lewanowicz' algorithm.
\begin{example} The function $(1-x^2)^{-1/4}$ has been dealt with in Example~\ref{ex-5}.
Lewanowicz' algorithm returns the second order recurrence~\eqref{eq coeffs ex-5}.
The numerator returned by the other algorithms has order~4:
\[(2n+1)c_n-4(n+2)c_{n+2}+(2n+7)c_{n+4}=0.\]
It is however possible to recover the smaller order recurrence: the gcld of $A=(2n+7)S^4-4(n+2)S^2+(2n+1)$ with $\Int$ is $\Int$, so that $A$ factors as $A=(S^2-1)P$ with~$P$ as in Example~\ref{ex-5}.
\end{example} 
More generally, dividing out the result of the computation of $I^k\phi(L)$ on the left by the gcld with~$\Int^k$ yields the result of Lewanowicz's algorithm.
\subsection{Paszkowski's Algorithm}
The starting point of Paszkowski's algorithm is to rewrite~$L$ from~\eqref{eq:L} as
\[L=\sum_{i=0}^k\partial_x^iq_i(x).\]
The polynomials~$q_i$ can be computed inductively starting with~$q_k=p_k$ and subtracting~$\partial_x^kq_k$ to produce a smaller order operator.
Then
\begin{equation}\label{Paszkowski}
\Int^k\varphi(L)=\sum_{i=0}^k\Int^{k-i}q_i(X).
\end{equation}
Algorithm~\ref{alg P1} follows.
\begin{algorithm}
  \caption{Paszkowski's Algorithm}
  \label{alg P1}
  \begin{algorithmic}
    \REQUIRE $L=\sum_{i=0}^k p_i(x)\partial_x^i$
    \ENSURE $\Int^k\varphi(L)$
	\STATE Compute $q_i$'s such that $L=\sum_{i=0}^k\partial_x^iq_i(x)$
	\STATE $R:=q_k(X)$
	\FORALL{$i$ from $1$ to $k$ }
    	\STATE $R:=R+\Int^{i}q_{k-i}(X)$
	\ENDFOR
 	\RETURN $R$
  \end{algorithmic}
\end{algorithm}

\subsection{Rebillard's Algorithm}                        
The starting point of Rebillard's algorithm is the identity
\[X_k=\Int^kXD^k=(2n)^{-1}((n+k)S+(n-k)S^{-1}),\]
that follows from an easy induction.
From there, he deduces
\[\Int^k\varphi(L)=\sum_{i=0}^k{\Int^kp_i(X)D^k\Int^{k-i}}=\sum_{i=0}^{k}{p_i(X_k)\Int^{k-i}}.\]
Algorithm~\ref{alg Reb} follows.
\begin{algorithm}
  \caption{Rebillard's Algorithm}
  \label{alg Reb}
  \begin{algorithmic}
    \REQUIRE $L=\sum_{i=0}^k p_i(x)\partial_x^i$
    \ENSURE $\Int^k\varphi(L)$
	\STATE Compute $p_i(X_k)$, $i=0,\dots,k$
	\STATE $R:=p_k(X_k)$
	\FORALL{$i$ from $1$ to $k$ }
    	\STATE $R:=R+p_{k-i}(X_k)\Int^{i}$
	\ENDFOR
	\RETURN $R$
  \end{algorithmic}
\end{algorithm}

\subsection{Complexity Analysis}
We now give a complexity analysis of Paszkowski's, Rebillard's and Lewanowicz' algorithms. This reveals a source of inefficiency for large orders, that we correct in our new algorithm in the next section.

We need to consider the size of  polynomials in two variables~$n$ and~$S$. We say that a polynomial has \emph{bidegree} $(m,p)$ in $(n,S)$ when it has degree~$m$ in~$n$ and $p$ in~$S$.

First, we state more precisely the shape of~$\Int^i$.
\begin{prop}[Rebillard \cite{Rebillard1998}]
	\label{Di}
	For all $i \in \N^*$,
\[\Int^{i}=\frac{1}{r(i)}\Bigg((n+1)_{i-1}S^{-i}
		+\sum_{k=1}^{i-1}{s(k) S^{-i+2k}}+(n-i+1)_{i-1}S^i\Bigg),\]
where $r(i)=2^i n\prod_{k=1}^{i-1}(n^2-k^2)$,
\[s(k)=(-1)^k\binom{i}{k}(n-i+2k)(n+k+1)_{i-1-k}(n-i+1)_{k-1},\]
and we use the Pochhammer symbol $(a)_i=a(a+1)\dotsm(a+i-1)$.
\end{prop}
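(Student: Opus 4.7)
My approach is induction on~$i$. The base case $i=1$ is the defining equality $\Int=(2n)^{-1}(S^{-1}-S)$, for which the middle sum is empty; I would also work out $i=2$ by hand as a sanity check on sign conventions and on the precise form of $s(k)$.

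For the inductive step, assuming the formula for $\Int^i$, the plan is to compute
\[
\Int^{i+1} \;=\; \tfrac{1}{2n}(S^{-1}-S)\cdot\Int^{i} \;=\; \tfrac{1}{2n}S^{-1}\Int^i \;-\; \tfrac{1}{2n}S\,\Int^i
\]
and push $S^{\pm 1}$ past every rational coefficient using $S^{\pm 1}a(n)=a(n\pm 1)S^{\pm 1}$. After collecting by powers of~$S$, the extreme monomials $S^{\pm(i+1)}$ receive a single contribution each, while an interior monomial $S^{-(i+1)+2k}$ receives two: one from $\tfrac{1}{2n}S^{-1}$ acting on the coefficient of $S^{-i+2k}$ in~$\Int^i$, and one from $-\tfrac{1}{2n}S$ acting on the coefficient of $S^{-i+2(k-1)}$.

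I would then match term by term against the claimed formula for $\Int^{i+1}$. The denominator update is clean: $r(i+1)/r(i)=2(n^2-i^2)$, which absorbs the extra $1/(2n)$ together with the shifts $n\mapsto n\pm 1$ induced in $r(i)$ by the leading $S^{\pm 1}$. The two boundary coefficients drop out immediately from $(n+1)_i=(n+1)(n+2)_{i-1}$ and its mirror, since each comes from a single contribution. The substance of the induction is the interior identity: once the common Pochhammer factors are extracted, the two contributions to the coefficient of $S^{-(i+1)+2k}$ combine into $s_{i+1}(k)$ via the Pascal identity $\binom{i}{k-1}+\binom{i}{k}=\binom{i+1}{k}$, together with one shift identity of the form $(n+k+1)_{i-1-k}\big|_{n\to n-1}=(n+k)_{i-1-k}$ to align the remaining Pochhammer factors.

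The main obstacle will be bookkeeping rather than any new idea: tracking the $n\to n\pm 1$ shifts inside every Pochhammer symbol, and aligning the signs $(-1)^{k-1}$ and $(-1)^k$ coming from the two contributions so that they add (rather than cancel) into the expected $(-1)^k\binom{i+1}{k}$. No tool beyond Pascal's rule and the definition of the Pochhammer symbol is required; the argument is computational but requires care with indexing.
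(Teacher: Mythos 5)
The paper does not actually print a proof of this proposition---it says only that ``the proof is a tedious but easy induction that we omit here''---and the induction you outline (left-multiply $\Int^{i}$ by $\Int=(2n)^{-1}(S^{-1}-S)$, commute $S^{\pm1}$ past the coefficients, match the two extreme monomials and combine the two contributions to each interior one) is exactly that intended argument. Your denominator bookkeeping, $r(i+1)=2(n^2-i^2)\,r(i)$ up to the shifts $n\mapsto n\pm1$ induced inside $r(i)$, is also correct, as is the identification of which monomials receive one versus two contributions.

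There is, however, a step that fails as you state it: the base case. With $(a)_0=1$, the printed formula at $i=1$ reads $\Int=(2n)^{-1}(S^{-1}+S)$, whereas $\Int=(2n)^{-1}(S^{-1}-S)$. The coefficient of $S^{i}$ should carry a sign $(-1)^i$, i.e.\ it should be $(-1)^i(n-i+1)_{i-1}$; this is exactly the value of $s(k)$ extended to $k=i$ (just as $(n+1)_{i-1}$ is its value at $k=0$), and it is forced by your own inductive step, since the unique contribution to $S^{i+1}$ comes from $-(2n)^{-1}S$ acting on the $S^{i}$ term and hence flips sign at every iteration. So the statement as printed is correct only for even~$i$, and your planned sanity check at $i=2$ will not detect the problem because the even case happens to agree---check $i=1$ carefully or compute $i=3$ instead. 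You should prove the corrected statement (or flag the sign discrepancy); as literally written, your claim that the base case ``is the defining equality'' is false and the induction cannot close. One smaller caveat: the interior terms do not combine by Pascal's rule alone. You must also check that the two shifted linear factors agree, namely $(n-i+2k)|_{n\to n-1}=(n-i+2(k-1))|_{n\to n+1}=n-(i+1)+2k$, which is the factor required for $s_{i+1}(k)$, and that the ratios $r(i+1)/\bigl(2n\,r(i)|_{n\to n\pm1}\bigr)$ supply the remaining Pochhammer factors before $\binom{i}{k-1}+\binom{i}{k}=\binom{i+1}{k}$ can be applied; this is consistent with the bookkeeping you anticipate, but it is where the real work of the induction lies.
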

In particular, the bidegree of~$r(i)\Int^i$ in $(n,S)$ is~$(i-1,2i)$. The proof is a tedious but easy induction that we omit here. From this formula follows a precise estimate of the size of the polynomial we are computing.
\begin{corollary}\label{coro:size}If $L$ in~\eqref{eq:L} has bidegree~$(d,k)$ in~$(x,\partial_x)$, then $r(k)\Int^k\varphi(L)$ is a polynomial of bidegree  in~$(n,S)$ at most $(2k-1,2(k+d))$.
\end{corollary}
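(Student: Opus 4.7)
The plan is to invoke Paszkowski's identity~\eqref{Paszkowski} and apply Proposition~\ref{Di} term by term. First I would rewrite $L=\sum_{i=0}^k\partial_x^i q_i(x)$; this rewriting starts from $q_k=p_k$ and each subsequent step only brings in derivatives of already-processed coefficients, so a short induction gives $\deg q_i\le d$ for all $i$. Applying $\varphi$ and using $\Int^k D^i=\Int^{k-i}$, Paszkowski's identity reads
\[
\Int^k\varphi(L)=\sum_{i=0}^k\Int^{k-i}\,q_i(X),
\]
and the task reduces to bounding the bidegree of each summand $r(k)\Int^{k-i}q_i(X)$.

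Next I would handle the factor $r(k)\Int^{k-i}$ by itself. For $i=k$ this is just $r(k)$, of bidegree $(2k-1,0)$. For $i<k$ I would factor
\[
r(k)\Int^{k-i}=\bigl(r(k)/r(k-i)\bigr)\cdot\bigl(r(k-i)\Int^{k-i}\bigr),
\]
noting that $r(k)/r(k-i)=2^i\prod_{j=k-i}^{k-1}(n^2-j^2)$ is a genuine polynomial in $n$ of degree $2i$, and invoking Proposition~\ref{Di} for the right-hand factor, which has bidegree $(k-i-1,2(k-i))$. Hence $r(k)\Int^{k-i}$ has bidegree at most $(k+i-1,2(k-i))$.

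To finish, I would use that $X=(S+S^{-1})/2$ has coefficients in $\Q$, so $q_i(X)$ is a Laurent polynomial in $S$ with $n$-free coefficients. Right multiplication by such an operator preserves the $n$-degree and enlarges the $S$-span by at most that of the multiplier, which is $2\deg q_i\le 2d$. Thus each summand $r(k)\Int^{k-i}q_i(X)$ has bidegree at most $(k+i-1,2(k-i)+2d)$ (and $(2k-1,2d)$ when $i=k$). Taking the maximum over $i\in\{0,\dots,k\}$, the $n$-degree peaks at $2k-1$ (when $i=k$) and the $S$-span at $2(k+d)$ (when $i=0$), which gives the claimed bidegree.

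The only delicate point — and the main potential obstacle — is verifying that $r(k)/r(k-i)$ is actually a polynomial so no spurious denominator survives, and that rewriting $L$ as $\sum\partial_x^i q_i(x)$ does not inflate the $x$-degree; once these are in hand, the rest is routine degree tracking built on Proposition~\ref{Di}.
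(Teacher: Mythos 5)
Your proof is correct and follows essentially the same route as the paper: rewrite $L$ in the Paszkowski form $\sum\partial_x^iq_i(x)$ with $\deg q_i\le d$, apply identity~\eqref{Paszkowski}, and bound each summand by factoring out $r(k)/r(k-i)$ (a polynomial of degree $2i$ in $n$) so that Proposition~\ref{Di} applies to $r(k-i)\Int^{k-i}$; your degree bookkeeping $(k+i-1,\,2(k-i)+2d)$ per term is just the paper's $(2k-i-1,\,2i+2d)$ with the summation index reversed, and both maxima give $(2k-1,\,2(k+d))$.
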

\begin{proof}
First, $L$ can be rewritten as in Paszkowski's algorithm $\sum_{i=0}^k{\partial_x^iq_i(x)}$ with $\deg q_i\le d$. The identity
\[r(k)\Int^k\varphi(L)=\sum_{i=0}^k{\frac{r(k)}{r(i)}(r(i)I^i)q_{k-i}(X)}\]
shows that this is a polynomial in~$n$. Each term of the sum is the product of a polynomial of bidegree~$(2(k-i),0)$, a polynomial of bidegree $(i-1,2i)$, a polynomial of bidegree at most~$(0,2d)$. Thus each summand has bidegree at most $(2k-i-1,2i+2d)$, whence the result.
\end{proof}
\begin{prop}\label{theoPasz}Given $L$ as above for input, Paszkowski's algorithm requires~$O(dk^3)$ arithmetic operations in~$\Q$.
\end{prop}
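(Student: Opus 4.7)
The plan is to account for three sources of cost in Paszkowski's algorithm: preparing the coefficients $q_i$, building the operators $\Int^i$ (or scaled versions of them), and accumulating the products $\Int^i q_{k-i}(X)$ into $R$. Throughout, I would clear denominators once and for all by working with $r(k) R$ instead of $R$; since $r(k)/r(i)$ is a polynomial in $n$ of degree $2(k-i)$, each summand $r(k) \Int^i q_{k-i}(X)$ is then a genuine polynomial in $n$ and $S$, and the additions in the loop become purely coefficient-wise in $\Q[n,S,S^{-1}]$.

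First I would bound the cost of computing the $q_i$. Rewriting $L = \sum_i p_i(x)\partial_x^i$ as $L = \sum_i \partial_x^i q_i(x)$ is carried out by the standard greedy elimination: at step $i$, apply the commutation $\partial_x^i q_i(x) = \sum_{j=0}^i \binom{i}{j} q_i^{(j)}(x)\partial_x^{i-j}$ and subtract from $L$. Each step involves $O(i)$ derivatives of a degree-$\leq d$ polynomial and $O(id)$ subtractions, giving $O(dk^2)$ arithmetic operations overall, which is absorbed in the target bound.

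Next, I would combine Proposition~\ref{Di} with Corollary~\ref{coro:size}. From Proposition~\ref{Di}, $r(i)\Int^i$ has bidegree $(i-1,2i)$ in $(n,S)$, so $r(k)\Int^i = (r(k)/r(i)) \cdot (r(i)\Int^i)$ has bidegree $(2k-i-1,2i)$. Computing $r(k)\Int^i$ from the explicit formula costs $O(i(k-i))$ operations per coefficient, $O(i^2(k-i))$ total, and this can even be amortized across iterations. Then the product $r(k)\Int^i \cdot q_{k-i}(X)$ is formed by multiplying an operator with $2i+1$ monomials in $S$, each a polynomial of degree at most $2k-i-1$ in $n$, by a polynomial in $X$ of degree $d$ whose representation in $S$ has $2d+1$ scalar-coefficient monomials. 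Because the right factor has scalar coefficients in $n$, this Ore product reduces to polynomial multiplication in $\Z[n][S,S^{-1}]$, at a cost of $O((i+1)(d+1)(k-i))$ operations in $\Q$, i.e.\ $O(idk)$. Adding the result to $r(k)R$ is bounded by its size, $O((k-i)(i+d))$, which is negligible.

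Summing over $i=1,\dots,k$ gives
\[\sum_{i=1}^k O(i\,d\,k) \;=\; O\!\left(dk \cdot \tfrac{k(k+1)}{2}\right) \;=\; O(dk^3),\]
which dominates the other contributions and yields the claimed bound. The main technical obstacle I anticipate is the bookkeeping on the Ore products: one must check that no hidden cost arises from the fact that $n$ and $S$ do not commute when shifting coefficients, and that the common-denominator trick genuinely makes all intermediate additions coefficient-wise. The key observation that defuses this is that $q_{k-i}(X)$ expands to a \emph{scalar}-coefficient Laurent polynomial in $S$, so shifts of $n$ never come into play in the products of the loop body, and the cost reduces to the straightforward bidegree count above.
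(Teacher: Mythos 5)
Your overall strategy is the paper's: compute the $q_i$ in $O(dk^2)$ operations, identify the dominant cost as the $k$ products $\Int^i q_{k-i}(X)$, bound each one via the bidegree of $\Int^i$ from Proposition~\ref{Di} together with the observation that $q_{k-i}(X)$ is a Laurent polynomial in $S$ with scalar coefficients (so the Ore product is a plain polynomial product), and sum over $i$. The paper keeps $\Int^i$ at its natural size --- numerators of degree $i-1$ over the denominator $r(i)$ --- so each product costs $O(i^2d)$ and the sum is $O(dk^3)$; your variant with per-step cost $O(idk)$ sums to the same bound.

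The one genuine problem is your decision to ``clear denominators once and for all'' by working with $r(k)\Int^i$. Forming these scaled operators costs, as you yourself estimate, $O(i^2(k-i))$ at step $i$ (each of the $\Theta(i)$ coefficients is a degree-$(i-1)$ polynomial to be multiplied by $r(k)/r(i)$, which has degree $2(k-i)$), and $\sum_{i=1}^k i^2(k-i)=\Theta(k^4)$, which is \emph{not} $O(dk^3)$ when $d=o(k)$. The phrase ``this can even be amortized across iterations'' is carrying the whole bound at that point and is not justified: amortizing requires \emph{not} materializing $r(k)\Int^i$ at all, e.g.\ performing each product against $r(i)\Int^i$ (cost $O(i^2d)$, as in the paper) and only then accumulating the sum Horner-style using $r(i+1)/r(i)=2(n^2-i^2)$, a degree-$2$ multiplier costing $O((i+d)\,i)$ per step. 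Either revert to the paper's normalization for the products or spell out such an accumulation scheme; as written, the organization you describe does not meet the claimed complexity. (A minor point besides: the per-coefficient cost of the loop products under your normalization is $\Theta(2k-i)=\Theta(k)$, not $O(k-i)$; your final figure $O(idk)$ is nevertheless correct.)
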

\begin{proof}
The first step is the computation of the~$q_i$'s from the~$p_i$'s. 
The inductive method requires only $O(dk^2)$ arithmetic operations. 
Using ideas from~\cite{BoChLR08}, it is actually possible to decrease this complexity further to $O(\M(dk))$ operations~\cite{Bo09} (here, $\M$ is the complexity of polynomial product, see, e.g., \cite{GathenGerhard1999}).

The next step is the loop. The main cost in step $i$ is the multiplication of $\Int^{i}$ by $q_{k-i}(X)$. We multiply a polynomial of bidegree $(i-1,2i)$, with a polynomial in $S$ only, of degree~$2d$. The cost of this multiplication is $O(i^2d)$ arithmetic operations.
Summing for $i$ up to~$k$ gives the result.
\end{proof}

\begin{prop}
In the same conditions,	Rebillard's algorithm requires $O(d^3k+d^2k^3)$ arithmetic operations.
\end{prop}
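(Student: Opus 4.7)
The plan is to mirror the structure of the analysis for Paszkowski's algorithm, cleanly splitting the work into two phases: the precomputation of the polynomials $p_i(X_k)$ for $i=0,\dots,k$, and the accumulating loop $R:=R+p_{k-i}(X_k)\Int^i$. Each phase will be shown to contribute one of the two summands in the final bound.

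First, I would establish a bidegree bound on $p_i(X_k)$ in the variables $(n,S)$ after clearing a polynomial denominator in $n$. Since $X_k=(2n)^{-1}((n+k)S+(n-k)S^{-1})$, each right multiplication by $X_k$ raises the $S$-degree by 2 and the $n$-degree by 1, once the non-commutation rule $Sf(n)=f(n+1)S$ is used to push coefficients to the left and the $(2n)^{-1}$ factor is absorbed into a running denominator. An easy induction on the exponent $j$ then gives that $X_k^{j}$ has bidegree at most $(j,2j)$, so that Horner evaluation of $p_i$ (of $x$-degree at most $d$) at $X_k$ produces an object of bidegree at most $(d,2d)$, i.e.\ $O(d^2)$ coefficients in $\Q$.

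Next, I would count the cost of this Horner evaluation. At the $j$-th step one multiplies an intermediate polynomial of bidegree at most $(j,2j)$ (size $O(j^2)$ in $\Q$) by $X_k$, which has only $O(1)$ $S$-monomials with linear-in-$n$ coefficients; schoolbook Ore multiplication costs $O(j^2)$. Summing for $j=1,\dots,d$ gives $O(d^3)$ per polynomial, hence $O(d^3 k)$ for all $k+1$ values of $i$, which yields the first summand. For the loop, Proposition~\ref{Di} gives that $\Int^i$ has bidegree $(i-1,2i)$, so $O(i^2)$ coefficients; multiplying $p_{k-i}(X_k)$ (size $O(d^2)$) by $\Int^i$ (size $O(i^2)$) by schoolbook Ore multiplication costs $O(d^2 i^2)$. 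Summing $\sum_{i=1}^{k} d^2 i^2 = O(d^2 k^3)$ gives the second summand, and adding the two bounds produces the announced $O(d^3 k + d^2 k^3)$.

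The main obstacle is the first step: one must verify that the denominators introduced by the $(2n)^{-1}$ factor of $X_k$ do not blow up under iteration, accumulating only as a polynomial factor in $n$ of degree $O(d)$, so that the arithmetic can be carried out over $\Q[n]$ at the stated cost. Once that bookkeeping on denominators and on the $(n,S)$-bidegrees is established, the remaining estimates are routine size-times-size counts for schoolbook multiplication in $\Q(n)\langle S, S^{-1}; S\rangle$.
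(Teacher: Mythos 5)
Your proposal is correct and follows essentially the same route as the paper: bound the bidegree of $p_i(X_k)$, charge $O(d^3k)$ for the Horner precomputations, then $O(d^2i^2)$ for each product $p_{k-i}(X_k)\Int^i$ and sum to $O(d^2k^3)$. The only discrepancy is your bidegree claim for $X_k^j$ (you get $(j,2j)$, the paper states $(3j,2j)$; over a common denominator the $n$-degree actually grows by about $2$ per factor, not $1$, because reducing to the common denominator multiplies the other terms by the new shifted factor of $2n$), but since the coefficient count remains $O(j^2)$ either way this does not affect the stated complexity.
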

\begin{proof}
The first step is the computation of the $p_i(X_k)$. The polynomial $X_k^i$ has bidegree $(3i,2i)$ in $(n,S)$. Then each $p_i(X_k)$ can be computed in $O(d^3)$ operations and all of them in $O(d^3k)$ operations.

The cost of the $i$th step of the loop is dominated by the cost of the multiplication of $p_{k-i}(X_k)$ by $\Int^i$.
The polynomial $p_{k-i}(X^k)$ has bidegree $(3d,2d)$ in $(n,S)$, while $\Int^i$ has bidegree $(2i-1,2i)$. Naive multiplication then requires $O(d^2i^2)$ operations. Summing over $k$ gives the result.
\end{proof} 

The output of Lewanowicz' algorithm is different in general. We give a comparison in the cases when it coincides.
\begin{prop}In the same conditions, and if all the gcrd during its execution are trivial, Lewanowicz' algorithm requires 
$O(dk^3)$ arithmetic operations.
\end{prop}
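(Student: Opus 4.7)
The plan is to mirror the size-and-cost analyses already carried out for Paszkowski's and Rebillard's algorithms: at each iteration bound the bidegrees of the intermediate polynomials $P$, $Q$, $\hat{P}$, $\hat{U}$, then bound the cost of each basic operation in terms of those sizes, and finally sum.

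First, I would use the trivial-gcrd hypothesis together with Proposition~\ref{prop:Lewanowicz} to obtain the invariant $Q = \Int^j$ after the $j$-th iteration of the loop (writing $j=k-i$, where $i$ is the current loop variable). By Proposition~\ref{Di}, after clearing the common denominator $r(j)$, $Q$ becomes a polynomial of bidegree $(j-1,\,2j)$ in $(n,S)$. Applied to the partial operator $\sum_{l=i}^{k} p_l(x)\partial_x^{l-i}$ of order $j$ and coefficient degree at most $d$, Corollary~\ref{coro:size} then gives that the numerator $P$ has bidegree at most $(2j-1,\,2(j+d))$ after clearing $r(j)$, so it has $O(j(j+d))$ monomials.

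Next, I would analyze the cost of each of the three operations in a single iteration. The key observation is that the left multiplier $\hat{P}$ returned by the $\operatorname{lclm}$ has bounded size: since $\hat{P}\cdot \Int^{j-1} = \Int^j = \Int\cdot \Int^{j-1}$, uniqueness of the $\operatorname{lclm}$ up to a scalar forces $\hat{P}=\Int$, of bidegree $(0,2)$ in $(n,S)$ over the denominator $2n$. Hence the $\operatorname{lclm}$ step and the computation of $\hat{U}$ reduce to a single right-division of $\hat{P}P$ by $S^{-1}-S$, which costs $O(j(j+d))$ operations in $\Q$. The update $Q:=\hat{P}Q$ costs $O(j^2)$ because $\hat{P}$ has constant size. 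The update $P:=\hat{U}(2n)+Q\,p_i(X)$ is dominated by the product $Q\cdot p_i(X)$: $Q$ has bidegree $(j-1,2j)$, while $p_i(X)$ is a Laurent polynomial in $S$ with $O(d)$ monomials and commutes with $n$, so naive multiplication costs $O(j^2 d)$. Summing the dominant $O(j^2d)$ contribution over $j=1,\dots,k$ gives $\sum_{j=1}^{k} j^2 d = O(dk^3)$, matching the claimed complexity.

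The main obstacle is the careful identification of $\hat{P}$ with $\Int$ (and the corresponding bound on the bidegree of $\hat{U}$) from the trivial-gcrd assumption, together with the bookkeeping of the denominators $r(j)$, so that all intermediate polynomials can be represented as polynomials in $\Q[n][S,S^{-1}]$ of the claimed bidegrees without any representation blowup in the $n$-coefficients. Once these two facts are in place, the cost estimate is a routine summation.
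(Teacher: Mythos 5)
Your proposal is correct and follows essentially the same route as the paper, whose proof is only a two-line sketch: identify $Q=\Int^{k-i}$ from the trivial-gcrd hypothesis, observe that the lclm/cofactor computation is dominated by the product $Q\,p_i$ of cost $O(i^2d)$, and sum over the loop exactly as in the analysis of Paszkowski's algorithm. Your write-up merely supplies the size bounds and the identification $\hat P=\Int$ that the paper leaves implicit.
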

\begin{proof}
We only give a sketch. When all gcrd's are trivial, it turns out that the computation of lclm's and cofactors is of the same order of complexity as the computation of the product~$Q p_i$, where moreover~$Q=I^{k-i}$. This is the same as in the analysis of Paszkowski's algorithm.
\end{proof}

\subsection{New Fast Algorithm}
We now give another algorithm for the same operator $\Int^k\varphi(L)$. The design of our algorithm is motivated by computational complexity issues. In the  analyses above, most of the complexity comes from the fact that during the computations, the bidegrees of the intermediate polynomials grow linearly and they are multiplied by polynomials of fixed degree. Instead, we aim at balancing degrees so as to make use of the recent fast algorithm for the product of linear differential operators~\cite{FFTjoris,BoChLR08}, that we denote FFT-mult. We achieve the following complexity.

\begin{theo}\label{complexity}
Algorithm \ref{alg new} computes the recurrence operator~$\Int^k\varphi(L)$ in $O((d+k)k^{\omega-1})$ arithmetic operations. 
\end{theo}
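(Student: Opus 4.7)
The plan is a balanced divide-and-conquer on the order~$k$ of~$L$, combining two half-sized recursive calls by one fast Ore-polynomial multiplication. First I would rewrite~$L$ in Paszkowski form $L = \sum_{i=0}^{k}\partial_x^{i}\,q_{i}(x)$ with $\deg q_i \le d$, at a one-time cost of $O(\M(dk))$ (cf.\ the proof of Proposition~\ref{theoPasz}), which does not dominate the announced bound. Next I would split this sum in the middle,
\[ L \,=\, L^{(0)} \,+\, \partial_x^{k/2+1}\,L^{(1)}, \]
with $L^{(0)} := \sum_{i=0}^{k/2}\partial_x^{i}q_{i}$ of order~$k/2$ and $L^{(1)} := \sum_{j=0}^{k/2-1}\partial_x^{j}q_{j+k/2+1}$ of order~$k/2-1$, both of degree $\le d$ in~$x$. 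Applying~$\varphi$ and using $\Int^{k}D^{k/2+1} = \Int^{k/2-1}$ yields the key identity
\[
\Int^{k}\varphi(L)\;=\;\Int^{k/2}\,R^{(0)} \,+\, R^{(1)},
\]
where $R^{(0)} := \Int^{k/2}\varphi(L^{(0)})$ and $R^{(1)} := \Int^{k/2-1}\varphi(L^{(1)})$ are precisely the outputs of two recursive calls of Algorithm~\ref{alg new} on half-order operators.

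I would then bound the sizes of the intermediate polynomials using Corollary~\ref{coro:size}: each $R^{(j)}$ has bidegree $O(k,\,k+d)$ in~$(n,S)$ after clearing the denominator $r(k/2)$ given by Proposition~\ref{Di}, and $\Int^{k/2}$ itself has bidegree $(k/2-1,\,k)$ in~$(n,S)$. Thus the combining step reduces to a single Ore-polynomial product of polynomials of bidegree $O(k)$ in~$n$ and $O(k+d)$ in~$S$. Invoking FFT-mult from~\cite{FFTjoris,BoChLR08}, this product costs $O((d+k)\,k^{\omega-1})$ arithmetic operations, yielding the recurrence
\[
T(d, k) \;\le\; 2\,T(d, k/2) \,+\, O\bigl((d+k)\,k^{\omega-1}\bigr).
\]
Its solution is $T(d,k) = O((d+k)\,k^{\omega-1})$: the contribution of depth~$j$ is $O\bigl(2^{j}(d + k/2^{j})(k/2^{j})^{\omega-1}\bigr)$, whose sum splits as $d\,k^{\omega-1}\sum_j 2^{j(2-\omega)} + k^{\omega}\sum_j 2^{j(1-\omega)}$, both convergent geometric series for $\omega>2$.

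The main obstacle will be to justify the FFT-mult cost in our setting: the cited fast product is phrased for operators in $\Q[x]\langle\partial_x\rangle$, whereas we need it for Laurent Ore polynomials in $\Q[n]\langle S,S^{-1};\sigma\rangle$. I would dispatch this by multiplying both factors by a single power of~$S$ to make all $S$-exponents non-negative, and by clearing the $n$-denominator $r(k/2)$ from~$\Int^{k/2}$ via Proposition~\ref{Di}, so that the combining product becomes a genuine fast product of polynomials in $\Q[n]\langle S;\sigma\rangle$ of bidegree $O(k)\times O(k+d)$, for which the stated $O((d+k)k^{\omega-1})$ bound applies. A minor secondary point is that one of the two subproblems has order $k/2-1$ rather than~$k/2$, but this asymmetry only improves the analysis and does not affect the master-theorem resolution of the recurrence.
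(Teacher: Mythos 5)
Your divide-and-conquer skeleton is the paper's: the split of the Paszkowski form into two half-order pieces is, after reindexing, exactly the identity $P_{(0,\dots,k)}=P_{(0,\dots,\ell-1)}+\Int^{\ell}P_{(\ell,\dots,k)}$ behind Algorithm~\ref{alg new}, your size bounds come from the same Proposition~\ref{Di} and Corollary~\ref{coro:size}, and your resolution of the recurrence $T(d,k)\le 2T(d,k/2)+O((d+k)k^{\omega-1})$ by summing the two geometric series is correct (and slightly more explicit than the paper's one-line remark).

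The gap is in the combining product $\Int^{k/2}R^{(0)}$, which is precisely where all the work of the paper lies (Algorithm~\ref{fast-mult}). Two things go wrong with ``invoke FFT-mult''. First, FFT-mult is a balanced product: here one factor has $S$-degree $O(k)$ and the other $O(k+d)$, so one must first slice the long factor into $O(1+d/k)$ blocks of $S$-degree $O(k)$ (Eq.~\eqref{eq:decomp}) and multiply blockwise; this is how the bound $O((d+k)k^{\omega-1})$ actually arises and it is not automatic from the citations. Second, and more seriously, clearing the denominator $r(k/2)$ from $\Int^{k/2}$ does \emph{not} turn the combining step into ``a genuine fast product of polynomials in $\Q[n]\langle S;\sigma\rangle$'': the recursive output $R^{(0)}$ itself has rational-function coefficients, and denominators cannot be pulled through the non-commutative product. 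Writing $\Int^{\ell}=\sum_i c_i(n)S^i$, one has
\[
\bigl(r(\ell)\Int^{\ell}\bigr)\bigl(r(\ell)R^{(0)}\bigr)=\sum_i r(\ell)(n)\,c_i(n)\,r(\ell)(n+i)\,S^iR^{(0)},
\]
which is not a scalar multiple of $\Int^{\ell}R^{(0)}$, so the desired product cannot be recovered by a single division at the end; commuting $1/r(\ell)$ to the left instead inflates the denominator to a product of $O(k)$ shifts of $r(\ell)$, of degree $O(k^2)$ in $n$. The paper's way out is the real content of Algorithm~\ref{fast-mult}: since FFT-mult works by evaluation and interpolation, one evaluates the \emph{rational} coefficients of the blocks $A_i$ at points avoiding the zeros $-\ell,\dots,\ell$ of their denominators, multiplies the evaluations, and interpolates, using the a priori fact (from Corollary~\ref{coro:size}) that $r(\ell)\Int^{\ell}P_{(\ell,\dots,k)}$ has \emph{polynomial} coefficients of degree at most $k-1$ in $n$. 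Without this evaluation-at-non-poles argument, or a substitute for it, your cost bound for the combining step is unjustified.
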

Here, $\omega$ is a feasible exponent for matrix multiplication with coefficients in $\Q$ (see, e.g., \cite{GathenGerhard1999}). We now prove this result.
\begin{algorithm}
	\caption{Divide and Conquer}
	\label{alg new}
	\begin{algorithmic}
		\REQUIRE Polynomials $a_0(x),\dots,a_k(x)$
		\ENSURE $P_{(0,\dots,k)}=\sum_{i=0}^{k}\Int^{i}a_i(X)$
		\IF {$k=0$} \RETURN $a_0(X)$
		\ELSE
			\STATE $\ell:=\lceil k/2\rceil$
			\STATE Compute recursively $P_{(0,\dots,\ell-1)}$ and $P_{(\ell,\dots,k)}$.
			\RETURN $P_{(0,\dots,\ell-1)}+\Int^\ell P_{(\ell,\dots,k)}$.
		\ENDIF
	\end{algorithmic}
\end{algorithm}  

Starting from~\eqref{Paszkowski}, we write
\[
\sum_{i=0}^{k}{\Int^iq_{k-i}(X)}=   
\sum_{i=0}^{\ell-1}{\Int^iq_{k-i}(X)} +
\Int^\ell\sum_{i=\ell}^{k}{\Int^{i-\ell}q_{k-i}(X)}
=:P_{(0,\cdots,l-1)} + I^lP_{(l,\cdots,k)}.
\]
We choose~$\ell=\lceil k/2\rceil$ and apply the same idea recursively.

The time consuming part of the computation is the product $\Int^\ell P_{(\ell,\dots,k)}$, for which we give a specialized algorithm.
\begin{algorithm}
	\caption{Fast Multiplication\label{fast-mult}}
	\label{mult}
	\begin{algorithmic}
		\REQUIRE $I^{\ell}$ and $P:=\sum_{i=0}^\ell I^ia_i(X)$
		\ENSURE $I^\ell P$
		\STATE Decompose $P$ as in Eq.~\eqref{eq:decomp}
		\STATE $R:=0$
		\FORALL{$i$ from 0 to $\lfloor (\ell+d)/\ell\rfloor$}
			\STATE $R:=R+\operatorname{FFT-mult}(r(\ell)\Int^\ell,A_i)S^{-d-\ell+i(k+1)}$
		\ENDFOR
		\RETURN $1/r(2l)R$.
	\end{algorithmic}
\end{algorithm}

To simplify the presentation, assume~$k=2\ell$. Corollary~\ref{coro:size} implies that~$\Int^\ell$ has degree~$2\ell$ in~$S$, $P_{(\ell,\dots,k)}$ has degree at most $2\ell+2d$ in $S$. They have rational function coefficients whose degrees are also bounded by this result. If $d$ is large, the degrees in~$S$ are unbalanced, so we first decompose
\begin{equation}\label{eq:decomp}
P_{(\ell,\dots,k)}=A_0(n,S)S^{-d-\ell}+A_1(n,S)S^{-d+\ell+1}+\dotsb,
\end{equation}
where the~$A_i$'s have degree at most~$2\ell$ in~$S$. Note that this decomposition is only an extraction of coefficients and does not use any arithmetic operation.

We are thus left with the product of~$\Int^\ell$ with the~$A_i$'s. Although both have rational function coefficients and thus cannot be multiplied directly by FFT-mult, we also have that $r(\ell)\Int^\ell P_{(\ell,\dots,k)}$ has polynomial coefficients in $n$ of degree at most~$k-1$ and therefore so does $r(\ell)\Int^\ell A_i$. To perform the product efficiently, we make use of the fact that FFT-mult proceeds by evaluation and interpolation: during the evaluation phase, we evaluate the \emph{rational function} coefficients of~$A_i$ as if they were polynomials (and within the same complexity thanks to our degree bounds), avoiding the zeros~$-\ell,\dots,\ell$ of their denominators; similarly, we evaluate the polynomial coefficients of~$r(\ell)\Int^\ell$. Then we compute the necessary products. With the bounds on the degree in~$n$ we have for the \emph{polynomial} coefficients in the result, the interpolation phase then returns the result. The complexity of each of these multiplications is thus $O(\ell^\omega)$ operations. The algorithm for this multiplication is summarized in Algorithm~\ref{fast-mult}. Note also that a constant factor can be saved by not recomputing the ``FFT'' of $r(\ell)\Int^\ell$ at each time.

\begin{prop}
The cost of multiplying $I^{\ell}$ by $\sum_{i=0}^\ell I^{i}a_i(X)$ with~$\deg a_i\le d$ using Algorithm~\ref{fast-mult} is $O((\ell+d)\ell^{\omega-1})$ arithmetic operations.
\end{prop}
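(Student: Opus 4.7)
The plan is to separate the analysis into (i) bounding the number of calls to FFT-mult made by Algorithm~\ref{fast-mult}, and (ii) bounding the cost of each such call, using the bidegree information from Proposition~\ref{Di} and Corollary~\ref{coro:size}.

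First, I would count the iterations of the loop. The decomposition~\eqref{eq:decomp} chops $P=\sum_{i=0}^\ell I^i a_i(X)$, whose $S$-degree is at most $2\ell+2d$ by Proposition~\ref{Di} together with $\deg a_i\le d$, into consecutive blocks $A_j$ of $S$-degree at most $2\ell$. The number of such blocks is $\lfloor(\ell+d)/\ell\rfloor+1=O(1+d/\ell)$, and their extraction uses no arithmetic operation.

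Second, I would analyze the cost of one FFT-mult call. The operator $r(\ell)\Int^\ell$ has bidegree $(\ell-1,2\ell)$ in $(n,S)$ by Proposition~\ref{Di}, and each $A_j$ has bidegree $(O(\ell),2\ell)$. The subtle point is that the coefficients of $A_j$ are rational functions in $n$ (their common denominator divides $r(\ell)$), yet the result $r(\ell)\Int^\ell A_j$ has polynomial coefficients in $n$ of degree $O(\ell)$: this polynomiality follows from Corollary~\ref{coro:size} applied to the relevant slice of the full product, since the outer factor $S^{-d-\ell+i(k+1)}$ is a pure shift and does not disturb degrees. This is exactly what FFT-mult needs: we evaluate the rational function coefficients of $A_j$ at $O(\ell)$ integer points chosen away from the zeros $\{-\ell,\dots,\ell\}$ of the common denominator, evaluate the polynomial coefficients of $r(\ell)\Int^\ell$ at the same points, perform $O(\ell)$ polynomial-in-$S$ products of degree $O(\ell)$ for a cost of $O(\ell^\omega)$ via the fast algorithm of~\cite{FFTjoris,BoChLR08}, then interpolate the polynomial coefficients of the result from the known degree bound. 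Each call therefore costs $O(\ell^\omega)$ operations.

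Summing $O(\ell^\omega)$ over the $O(1+d/\ell)$ blocks yields $O((1+d/\ell)\ell^\omega)=O((\ell+d)\ell^{\omega-1})$ arithmetic operations. The $O(\ell)$ outer additions and shifts accumulating $R$, as well as the final normalization by $1/r(2\ell)$, contribute only $O(\ell^2)$ operations and are absorbed into the bound. The main obstacle is the second point: carefully justifying that FFT-mult can be applied in spite of the rational coefficients of $A_j$, which rests precisely on the polynomiality of the product coefficients in $n$ and on the freedom to avoid the finitely many bad evaluation points.
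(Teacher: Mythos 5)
Your proposal is correct and follows essentially the same route as the paper: $O(1+d/\ell)$ blocks from the decomposition~\eqref{eq:decomp}, each FFT-mult call costing $O(\ell^\omega)$ thanks to the degree bounds in $n$ and $S$ and the evaluation--interpolation trick avoiding the poles, with shifts free and additions negligible. The justification you give for applying FFT-mult despite the rational coefficients of the $A_j$ is the same one the paper presents in the paragraph preceding the proposition.
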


\begin{proof}
We have seen that each multiplication~$r(\ell)\Int^\ell A_i$ has complexity $O(\ell^\omega)$. This is performed $(\ell+d)/\ell$ times. Right multiplication by powers of~$S$ does not use any arithmetic operations. The additions require a smaller number of operations, whence the result.
\end{proof}

Now, let $T(k,d)$ be the complexity of Algorithm~\ref{alg new}. Using this proposition, we get
\[T(k,d)=2T(k/2,d)+O((d+k)k^{\omega-1}),\] 
the complexity estimate in Theorem~\ref{complexity} follows from the convergence of the geometric series. (Again, a constant factor can be saved by computing the powers of~$I$ only once.)

\subsection{Algorithm by Rebillard and Zakraj\v{s}ek}\label{rebillard-zakrajsek}
In~\cite{RebillardZakrajsek2006}, an algorithm of a different nature is proposed. It does not compute a numerator of~$\phi(L)$, but manages in some cases to derive a smaller order recurrence corresponding to a right factor of the numerator of~$\phi(L)$. We plan to come back to this algorithm in connexion to minimality issues in future work. Here, we merely give a few indications and comments on special cases.
\begin{example} The following is taken from~\cite{RebillardZakrajsek2006}. Starting from the differential operator $L=(x+1)^2\partial_x^2-(x+1)\partial_x+x+7/4$, the computation of~$\phi(L)$ by Lewanowicz's algorithm leads to a numerator of order~4, whereas the algorithm in~\cite{RebillardZakrajsek2006} produces one of order only~3. We note that this operator can also be obtained by Lewanowicz's algorithm, applied to~$\partial_x L$ instead of~$L$. In many cases, this technique applies.
\end{example}
Since the algorithm in~\cite{RebillardZakrajsek2006} computes a right factor of the numerator of~$\phi(L)$, analytic hypotheses such as (H) or (H') in our Theorem~\ref{main} are necessary.
\begin{example} In the case of $\arccos(x)$, the numerator of $\phi(L)$ is a constant (see Example~\ref{ex arccos}), so that this is also the result of the algorithm in~\cite{RebillardZakrajsek2006}. Starting from $(1-x^2)L$, we obtained an operator of order~4 in Ex.~\ref{ex arccos}. On this operator, the algorithm in~\cite{RebillardZakrajsek2006} returns an operator of order~2. 
\end{example}

\section{Examples}\label{sec:examples}
The fast algorithm does not lend itself easily to an efficient implementation in Maple, since it relies on fast evaluation/interpolation and fast matrix product. We have however implemented the slow algorithms in Maple and show how other algorithms from computer algebra can sometimes be applied to the resulting recurrences, so that nice expression for the coefficients can be recovered. We have also implemented variants of Horner-like evaluations that seem to perform well, see~\cite{Benoit2008}.

\begin{example}[arctan] Starting from $(x^2+1)\partial_x^2+2x\partial_x$, we get
	\[nc_{{n}}+ ( 6n+12) c_{{n+2}}+ ( n+4 ) c_{{n+4}}=0.\]
The initial conditions are computed by Maple as $c_0=c_2=0$ and $c_1=2\sqrt{2}-2$, $c_3=(14-10\sqrt{2})/3$. The recurrence can then be solved by Petkov\v sek's algorithm and we get
\[\arctan(x)=2\sum_{k\ge0}{(1-\sqrt{2})^{2k+1}\frac{T_{2k+1}(x)}{2k+1}}.\]
\end{example}

\begin{example}[error function] Starting from $\partial_x^2+2x\partial_x$, we get a more complicated recurrence:
	\[( {n}^{2}+3n ) c_{{n}}+ ( 2{n}^{3}+12{n}^{2}+24n+16 ) c_{{n+2}}- ( {n}^{2}+5n+4 ) c_{{n+4}}=0.\]
A closed form is known to be
\[c_{2k}=0,\quad c_{2k+1}=\frac{(-1)^k}{\sqrt{e\pi}}\frac{I_k(1)+I_{k+1}(1)}{2k+1},\]
but it seems that the algorithms in computer algebra are not strong enough to find this automatically, yet.
\end{example}

\begin{example}[arctanh] Starting from $L=(x^2-1)\partial_x^2+2x\partial_x$, we get \[(n+2)c_{n+2}-nc_n=0\]
by computing the numerator of~$\phi(L)$. Although neither of our hypotheses (H) or (H') holds here, this result is correct, as can be checked from the expansion
\[\operatorname{arctanh}(x)=2\sum_{k\ge0}\frac{T_{2k+1}(x)}{2k+1}.\]
Again, this suggests that more work on obtaining a recurrence of minimal order is necessary.
\end{example}

%
%
%



\end{document}